\newif\ifshaphered
\newcommand\revision[1]{\textcolor{red}{#1}}
\newcommand\deleted[1]{\st{#1}}
\newcommand\revision[1]{#1}
\newcommand\deleted[1]{}
\newcommand*{\textlabel}[2]{%
  \edef\@currentlabel{#1}
  \phantomsection
  #1\label{#2}
}
\tikzset{circle split part fill/.style args={#1,#2}{%
 alias=tmp@name, 
  postaction={%
    insert path={
     \pgfextra{%
     \pgfpointdiff{\pgfpointanchor{\pgf@node@name}{center}}%
                  {\pgfpointanchor{\pgf@node@name}{east}}%
     \pgfmathsetmacro\insiderad{\pgf@x}
      \fill[#1] (\pgf@node@name.base) ([xshift=-\pgflinewidth]\pgf@node@name.east) arc
                          (0:180:\insiderad-\pgflinewidth)--cycle;
      \fill[#2] (\pgf@node@name.base) ([xshift=\pgflinewidth]\pgf@node@name.west)  arc
                           (180:360:\insiderad-\pgflinewidth)--cycle;            
         }}}}}
\newtheorem{corollary}{Corollary}
\setlist[enumerate]{noitemsep, topsep=2pt}
\setlist[itemize]{noitemsep, topsep=2pt}
\setlist[description]{noitemsep, topsep=2pt, font=\normalfont\space}
\pgfplotsset{compat=1.15} 
\pgfplotsset{%
    axis line origin/.style args={#1,#2}{
        x filter/.append code={ 
            \ifx\pgfmathresult\empty\else\pgfmathparse{\pgfmathresult-#1}\fi
        },
        y filter/.append code={
            \ifx\pgfmathresult\empty\else\pgfmathparse{\pgfmathresult-#2}\fi
        },
        xticklabel=\pgfmathparse{\tick+#1}\pgfmathprintnumber{\pgfmathresult},
        yticklabel=\pgfmathparse{\tick+#2}\pgfmathprintnumber{\pgfmathresult}
    }
}
\newif\ifshowcomment
\newtheorem{definition}{Definition}
\newcommand{\Prov}{\mathcal{T}\xspace}
\newcommand{\Verif}{\mathcal{V}\xspace}
\newcommand{\Adv}{\mathcal{A}\xspace}
\newcommand{\Proof}{\mathcal{P}\xspace}
\newcommand{\WWW}{\mathbb{W}\xspace}
\newcommand{\III}{\mathbb{I}\xspace}
\newcommand{\HHH}{\mathbb{H}\xspace}
\newcommand{\AAA}{\mathbb{A}\xspace}
\newcommand{\XX}{\mathbf{X}\xspace}
\newcommand{\xx}{\mathbf{x}\xspace}
\newcommand{\yy}{\mathbf{y}\xspace}
\newcommand{\RR}{\mathbf{R}\xspace}
\newcommand{\fig}{\textrm{Figure}\xspace}
\newcommand{\Paragraph}[1]{~\vspace*{-0.8\baselineskip}\\{\bf #1}}
\begin{document}

\title{``Adversarial Examples'' for Proof-of-Learning}

\author{Rui Zhang$^\dag$\thanks{$^\dag$Rui Zhang and Jian Liu are co-first authors.}}
\author{Jian Liu$^\dag$\thanks{\IEEEauthorrefmark{1}Jian Liu is the corresponding author.}\IEEEauthorrefmark{1}}
\author{Yuan Ding}
\author{Zhibo Wang}
\author{Qingbiao Wu}
\author{Kui Ren}
\affil{Zhejiang University \authorcr Email: {\tt \{zhangrui98, liujian2411, dy1ant, zhibowang, qbwu, kuiren \}@zju.edu.cn}\vspace{1.5ex}}

\maketitle

\begin{abstract}
In S\&P '21, Jia et al. proposed a new concept/mechanism named proof-of-learning (PoL), which allows a prover to demonstrate ownership of a machine learning model by  proving integrity of the training procedure.
It guarantees that an adversary cannot construct a valid proof with less cost (in both computation and storage) than that made by the prover in generating the proof. 

A PoL proof includes a set of intermediate models recorded during training, together with the corresponding data points used to obtain each recorded model.
Jia et al. claimed that an adversary merely knowing the final model and training dataset cannot efficiently find a set of intermediate models with correct data points.

In this paper, however, we show that PoL is vulnerable to ``adversarial examples''! 
Specifically, in a similar way as optimizing an adversarial example, we could make an arbitrarily-chosen data point ``generate'' a given model, hence efficiently generating intermediate models with correct data points.
We demonstrate, both theoretically and empirically, that we are able to generate a valid proof with significantly less cost than generating a proof by the prover\deleted{, thereby we successfully break~PoL}\textlabel{.}{r4:1:1}

\end{abstract}

\ifshaphered
\input{shepherd-changelog}
\else
\fi

\section{Introduction}

Recently, Jia et al.~\cite{PoL} propose a concept/mechanism named {\em proof-of-learning} (PoL), 
which allows a prover $\Prov$ to prove that it has performed a specific set of computations to train a machine learning model;
and a verifier $\Verif$ can verify correctness of the proof with significantly less cost than training the model.
This mechanism can be  applied in at least two settings. 
First, when the intellectual property of a model owner is infringed upon (e.g., by a model stealing attack~\cite{tramer2016stealing, wang2018stealing, orekondy2019knockoff}),
it allows the owner to claim ownership of the model and resolve the dispute.
Second, in the setting of federated learning~\cite{mcmahan17a}, where a model owner distributes the training process across multiple workers, it allows the model owner to verify the integrity of the computation performed by these workers.
This could prevent Byzantine workers from conducting denial-of-service attacks~\cite{NIPS2017_f4b9ec30}.

\Paragraph{PoL mechanism.}
In their proposed mechanism~\cite{PoL}, 
$\Prov$ provides a PoL proof that includes:
(i) the training dataset,
(ii) the intermediate model weights at periodic intervals during training $W_0, W_k, W_{2k}, ..., W_{T}$, and 
(iii) the corresponding indices of the data points used to train each intermediate model.
With a PoL proof, one can replicate the path all the way from the initial model weights $W_0$ to the final model weights $W_T$ to be fully confident that $\Prov$ has indeed performed the computation required to obtain the final model.

During verification, $\Verif$ first verifies the provenance of the initial model weights $W_0$: whether it is sampled from the required initialization distribution;
and then recomputes a subset of the intermediate models to confirm the validity of the sequence provided.
However, $\Verif$ may not be able to reproduce the same sequence due to the noise arising from the hardware and low-level libraries.
To this end, they allow a distance between the recomputed model and its corresponding model in PoL.
Namely, for any $W_t$, $\Verif$ performs a series of $k$ updates to arrive at $W'_{t+k}$, which is compared to the purported $W_{t+k}$.
They tolerate:
\begin{center}
    $d(W_{t+k}, W'_{t+k}) \leq \delta$,
\end{center}
where $d$ represents a distance that could be $l_1$, $l_2$, $l_{\infty}$ or $cos$,
and $\delta$ is the verification threshold that should be calibrated before verification starts.



Jia et al.~\cite{PoL} claimed in their paper that an adversary $\Adv$ can never construct a valid a PoL with less cost (in both computation and storage) than that made by $\Prov$ in generating the proof (a.k.a. {\em spoof a PoL}).
However, they did not provide a proof to back their claim\textlabel{.}{r4:1:2}
\deleted{Instead, they simply designed some attacks by themselves and showed that those attacks are invalid. 
Without a doubt, this kind of security evaluation is unable to cover all potential attacks.}

\Paragraph{Our contribution.}
By leveraging the idea of generating adversarial examples, we successfully spoof a PoL!

In the PoL threat model, Jia et al.~\cite{PoL} assumed that ``{\em $\Adv$ has full access to the training dataset, and can modify it}''.
Thanks to this assumption, we can slightly modify a data point so that it can update a model and make the result pass the verification. 
In more detail, given the training dataset and the final model weights $W_T$, 
$\Adv$ randomly samples all intermediate model weights in a PoL: $W_0, W_k, W_{2k} ...$ (only $W_0$ needs to be sampled from the given distribution).
For any two neighboring model weights 
$(W_{t-k}, W_{t})$,
$\Adv$ picks batches of data points $(\mathbf{X}, \mathbf{y})$ from $D$,
and keeps manipulating $\mathbf{X}$ until:
\begin{center}
    $d(\texttt{update}(W_{t-k}, (\XX, \yy)), W_{t}) \leq \delta$.
\end{center}
The mechanism for generating adversarial examples ensures that the noise added to $\XX$ is minimized.

We further optimize our attack by sampling $W_0, W_k, W_{2k} ...$ in a way such that:
\begin{center}
    $d(W_t, W_{t-k}) < \delta$,  $\forall~0 < t < T$ and $t\mod k =0$.
\end{center}
With this condition, it becomes much easier for the ``adversarial'' $\XX$ to converge,
hence making our attack more efficient.

We empirically evaluate our attacks in both reproducibility and spoof cost.
We reproduced the results in~\cite{PoL} as baselines for our evaluations.
Our experimental results show that, in most cases of our setting, our attacks introduce smaller reproduction errors and less cost than the baselines\textlabel{.}{r4:1:3}
\deleted{That is to say, under the same assumption as the original paper, we can successfully spoof a PoL.}

\Paragraph{Organization.}
In the remainder of this paper, we first provide a brief introduction to PoL in Section~\ref{sec:background}.
Then, we formally describe our attack in Section~\ref{sec:attack} and extensively evaluate it in Section~\ref{sec:eval}.
In Section~\ref{sec:discuss}, 
we provide some countermeasures. 
Section~\ref{sec:related} compares our attacks to closely related work.

\Paragraph{Notations.} We introduce new notations as needed. 
A summary of frequent notations appears in Table~\ref{notationtable}.

\begin{table}[ht]
\small
\centering
\caption{Summary of notations}
\begin{spacing}{1.30}
\begin{tabular}{p{1.75cm} p{5cm}}
\hline
\textbf{Notation} & \textbf{Description} \\ \hline
$\Prov$ & prover \\ \hline
$\Verif$ & verifier \\ \hline
$\Adv$ & attacker \\ \hline
$D$ & dataset \\ \hline
$f_{W}$ & machine learning model \\ \hline
${W}$ & model weights \\ \hline
$E$ & number of epochs \\ \hline
$S$ & number of steps per epoch \\ \hline
$T$ & number of steps in $\Proof(\Prov, f_{W_T})$ \\ 
&$T = E \cdot S$ \\ \hline
$T'$ & number of steps in $\Proof(\Adv, f_{W_T})$ \\ \hline
$Q$ & number of models verified per epoch \\ \hline
$N$ & number of steps in generating an ``adversarial example'' \\ \hline
$k$ & number of batches in a checkpointing interval \\ \hline
$d()$ & distance that could be $l_1$, $l_2$, $l_{\infty}$ or $cos$ \\ \hline
$\delta$ & verification threshold \\ \hline
$\gamma$ & $\gamma \ll \delta$  \\ \hline
$\zeta$ & distribution for $W_0$ \\ \hline
$\eta$ & learning rate \\ \hline
$\varepsilon$ & reproduction error \\ \hline
$\mathbf{X}$ & batch of data points \\ \hline
$\mathbf{y}$ & batch of labels \\ \hline
$\mathbf{R}$ & batch of noise \\ \hline
\end{tabular}
\end{spacing}
\label{notationtable}
\vspace{-3mm}
\end{table}

\section{Proof-of-Learning}
\label{sec:background}

In this section, we provide a brief introduction to proof-of-learning (PoL). 
We refer to~\cite{PoL} for more details

\subsection{PoL definition}
\label{sec:definition}

PoL allows a prover $\Prov$ to demonstrate ownership of a machine learning model by  proving the integrity of the training procedure.
Namely, during training, $\Prov$ accumulates some secret information associated with training, which is used to construct the PoL proof $\Proof(\Prov, f_{W_T})$.
When the integrity of the computation (or model ownership) is under debate, an honest and trusted verifier $\Verif$ validates $\Proof(\Prov, f_{W_T})$ by querying $\Prov$ for a subset (or all of) the secret information, under which $\Verif$ should be able to ascertain if the PoL is valid or not.
A PoL proof is formally defined as follows:
\begin{definition}
A PoL proof generated by a prover $\Prov$ is defined as $\Proof(\Prov, f_{W_T}) = (\WWW, \III, \HHH, \AAA)$, 
where 
(a) $\WWW$ is a set of intermediate model weights recorded during training,
(b) $\III$ is a set of information about the specific data points used to train each intermediate model, 
(c) $\HHH$ is a set of signatures generated from these data points, and 
(d) $\AAA$ incorporates auxiliary information training the model such as hyperparameters, model architecture, optimizer and loss choices\footnote{For simplicity, we omit $\AAA$ in this paper and denote a PoL proof as $\Proof(\Prov, f_{W_T}) = (\WWW, \III, \HHH)$.}.
\end{definition}

An adversary $\Adv$ might wish to spoof $\Proof(\Prov, f_{W_T})$ by spending less computation and storage than that made by $\Prov$ in generating the proof.
By spoofing, $\Adv$ can claim that it has performed the computation required to train $f_{W_T}$. 
A PoL mechanism should guarantee: 
\begin{itemize}
    \item The  cost of verifying the PoL proof by $\Verif$ should be smaller than the cost (in both computation and storage) of generating the proof by $\Prov$.

    \item The cost of any spoofing strategy attempted by any $\Adv$ should be larger than the cost of generating the proof.
\end{itemize}

\subsection{Threat Model}
\label{sec:model}


In~\cite{PoL}, any of the following cases is considered to be a successful spoof by $\Adv$:
\begin{enumerate}
    \item {\em Retraining-based spoofing:} $\Adv$ produced a PoL for $f_{W_T}$ that is exactly the same as the one produced by $\Prov$, i.e., $\Proof(\Adv, f_{W_T}) = \Proof(\Prov, f_{W_T})$.
    \item {\em Stochastic spoofing:} $\Adv$ produced a valid PoL for $f_{W_T}$, but it is different from the one produced by $\Prov$ i.e.,$\Proof(\Adv, f_{W_T}) \neq \Proof(\Prov, f_{W_T})$.
    \item {\em Structurally Correct Spoofing:} $\Adv$ produced an invalid PoL for $f_{W_T}$ but it can pass the verification.
    \item {\em Distillation-based Spoofing:} $\Adv$ produced a valid PoL for an approximated model, which has the same run-time performance as $f_{W_T}$.
\end{enumerate}

The following adversarial capabilities are assumed in~\cite{PoL}:
\begin{enumerate}
    \item $\Adv$ has full knowledge of the model architecture, model weights, loss function and other hyperparameters.
    \item  $\Adv$ has full access to the training dataset $D$ and  can modify it. 
    {\bf This assumption is essential to our attacks.}
    \item 
    $\Adv$ has no knowledge of $\Prov$'s strategies about batching, parameter initialization, random generation and so on.
\end{enumerate}

\subsection{PoL Creation}

\renewcommand{\algorithmiccomment}[1]{$\triangleright$ #1}
\begin{algorithm}[htb]
\caption{PoL Creation (taken from~\cite{PoL})}
\label{alg:creation}
\LinesNumbered 
\KwIn {$D$, $k$, $E$, $S$, $\zeta$}
\KwOut {PoL proof: $\Proof(\Prov, f_{W_T})=(\WWW, \III, \HHH)$} 
$\WWW\leftarrow\{\}$
$\III\leftarrow\{\}$
$\HHH\leftarrow\{\}$\\
$W_0 \leftarrow \texttt{init}(\zeta))$ \hfill\CommentSty{initialize $W_0$}\\
\For{$e = 0  \to E-1 $}{
    $I \leftarrow \texttt{getBatches}(D, S)$
    
    \For{$s = 0  \to S-1 $}{
        $t:= e\cdot S + s$
        
        $W_{t+1} \leftarrow \texttt{update}(W_t, D[I[s]])$
        
        $\III.\texttt{append}(I[s])$
        
        $\HHH.\texttt{append}(h(D[I[s]]))$ \hfill\CommentSty{\revision{$h()$ is for computing the signature}}
        
        \eIf{$t \mod k = 0$}{
            $\WWW.\texttt{append}(W_t)$

        }{
            $\WWW.\texttt{append}(\mathbf{nil})$
        }
        
    }
}
\end{algorithm}




    
        
        
        
        

        

The PoL creation process is shown in Algorithm~\ref{alg:creation}, which is taken from~\cite{PoL} and slightly simplified by us.
$\Prov$ first initializes the weights $W_0$ according to an initialization strategy $\texttt{init}(\zeta)$ (line 2), where $\zeta$ is the distribution to draw the weights from.
If the initial model is obtained from elsewhere, a PoL is required for the initial model itself as well. 
We omit this detail in our paper for simplicity.

For each epoch, $\Prov$ gets $S$ batches of data points from the dataset $D$ via $\texttt{getBatches}(D, S)$ (Line 4), the output of which is a list of $S$ sets of data indices. 
In each step $s$ of the epoch $e$, the model weights are updated with a batch of data points in $D$ indexed by $I[s]$ (Line 7).
The \texttt{update} function leverages a suitable optimizer implementing a variant of gradient descent.
$\Prov$ records the updated model $W_t$ for every $k$ steps (Line 11), hence $k$ is a parameter called checkpointing interval and $\frac{1}{k}$ is then the checkpointing frequency.
To ensure that the PoL proof will be verified with the same data points as it was trained on, 
$\Prov$ includes a signature of the training data (Line 9) along with the data indices (Line 8).

\subsection{PoL Verification}
\label{sec:verification}

\renewcommand{\algorithmiccomment}[1]{$\triangleright$ #1}
\begin{algorithm}[htb]
\caption{PoL Verification (taken from~\cite{PoL})}
\label{alg:verification}
\KwIn {$\Proof(\Prov, f_{W_T})$, $D$, $k$, $E$, $S$, $\zeta$}
\KwOut {$\mathbf{success}$ / $\mathbf{fail}$} 
\If{\texttt{verifyInitialization}$(\WWW[0]) = \mathbf{fail}$}{
    {\bf return} $\mathbf{fail}$
}

$e \leftarrow 0$

$\mathit{mag} \leftarrow \{\}$

\For{$t = 0  \to T-1 $}{
    \If{$t \mod k = 0~\wedge~t \neq 0$}{
        $\mathit{mag}.\texttt{append}(d(\WWW[t], \WWW[t-k]))$
    }
    $e_t = \left \lfloor \frac{t}{S} \right \rfloor$
    
    \If{$e_t = e + 1$}{
        $idx \leftarrow \texttt{sortedIndices}(\mathit{mag}, \downarrow)$
        
        \If{\texttt{verifyEpoch}$(idx) = \mathbf{fail}$}{
            {\bf return} $\mathbf{fail}$
        }
        \revision{$e\leftarrow e_t$, $\mathit{mag} \leftarrow \{\}$}
    }
    \deleted{$e\leftarrow e_t$}
    
    \deleted{$\mathit{mag} \leftarrow \{\}$}
}
{\bf return} $\mathbf{success}$ \\
~
\SetKwFunction{FMain}{verifyEpoch}
\SetKwProg{Fn}{function}{}{end}

\Fn{\FMain{$idx$}}{
    \For{$q = 1 \to Q$}{
        $t := idx[q-1]$
        
        
        
        \deleted{$\texttt{verifyDataSignature}(\HHH[t], \III[t])$}
        
        \revision{$\texttt{verifyDataSignature}(\HHH[t], D[\III[t]])$}
        
        $W'_t \leftarrow \WWW[t]$
        
        \For{$i = 0 \to (k-1)$}{
            $I_{t+i} \leftarrow \III[t+i]$
            
            $W'_{t+i+1} \leftarrow \texttt{update}(W'_{t+i}, D[\III[t+i]])$
        }
        
        \If{$d(W'_{t+k}, \WWW[t+k])> \delta$}{
            {\bf return} $\mathbf{fail}$
        }
    }
}
\end{algorithm}

Algorithm~\ref{alg:verification} shows the PoL verification process.
$\Verif$ first checks if $W_0$ was sampled from the required distribution using a statistical test (Line 1).
Once every epoch, $\Verif$ records the distances between each two neighboring models in $\mathit{mag}$ (line 7-9);
sort mag to find $Q$ largest distances and verify the corresponding models and data samples via \texttt{verifyEpoch} (Line 12-13). 
Notice that there are at most $\left\lfloor \frac{S}{k}  \right\rfloor$ distances in each epoch, hence $Q \leq \left\lfloor \frac{S}{k}  \right\rfloor$.

In the \texttt{verifyEpoch} function, $\Verif$ first loads the batch of indexes corresponding to the data points used to update the model from $W_t$ to $W_{t+k}$.
Then, it attempts to reproduce $W_{t+k}$ by performing a series of $k$ updates to arrive at $W'_{t+k}$. 
Notice that $W'_{t+k} \neq W_{t+k}$ due to the noise arising from the hardware and low-level libraries such as cuDNN~\cite{chetlur2014cudnn}.
The reproduction error for the $t$-th model is defined as:
\begin{center}
    $\varepsilon_{\mathit{repr}}(t) = d(W_{t+k}, W'_{t+k})$,
\end{center}
where $d$ represents a distance that could be $l_1$, $l_2$, $l_{\infty}$ or $cos$.
It is required that:
\begin{center}
    $\mathrm{max}_t(\varepsilon_{\mathit{repr}}(t)) \ll d_{\mathit{ref}}$,
\end{center}
where $d_{\mathit{ref}} = d(W_T^1, W_T^2)$ is the distance between two models $W_T^1$ and $W_T^2$ trained with the same architecture, dataset, and initialization strategy, but with  different batching strategies and potentially different initial model weights.
A {\em verification threshold} $\delta$ that satisfies:
\begin{center}
   $\mathrm{max}_t(\varepsilon_{\mathit{repr}}(t)) < \delta < d_{\mathit{ref}}$,
\end{center}
should be calibrated before verification starts.
In their experiments, Jia et al.~\cite{PoL} adopted a normalized reproduction error:
\begin{center}
    $||\varepsilon_{\mathit{repr}}(t)|| = \frac{\mathrm{max}_t(\varepsilon_{\mathit{repr}}(t))}{d_{\mathit{ref}}}$
\end{center}
to evaluate the reproducibility.



In the end, we remark that the number of steps $T$ in PoL verification (Algorithm~\ref{alg:verification}) could be different from that in PoL creation (Algorithm~\ref{alg:creation}),
because $\Adv$ could come up with either a stochastic spoofing or a structurally correct spoofing. 

\section{Attack Methodology}
\label{sec:attack}

In this section, we describe our attacks in detail. 
All of our attacks are stochastic spoofing:
the PoL proof generated by \revision{the adversary} $\Adv$ is not exactly the same as the one provided by \revision{the prover} $\Prov$ (in particular, with a smaller number of steps $T'$), but can pass the verification. 

\fig~\ref{fig:overview} shows the basic idea of our attacks: the adversary $\Adv$ first generates dummy model weights: $W_0, ..., W_{T-1}$ (serving as $\WWW$); and then generates ``adversarial examples'' (serving as $\III$) for each pair of neighboring models.
An adversarial example is an instance added with small and intentional perturbations so that a machine learning model will make a false prediction on it. 
In a similar way as optimizing an adversarial example, we could make an arbitrarily-chosen date point ``generate'' a given model (we call it {\em adversarial optimization}), hence making $(\WWW, \III)$ pass the verification.

\begin{figure}[h]
    \centering
    \includegraphics[width=0.98\linewidth]{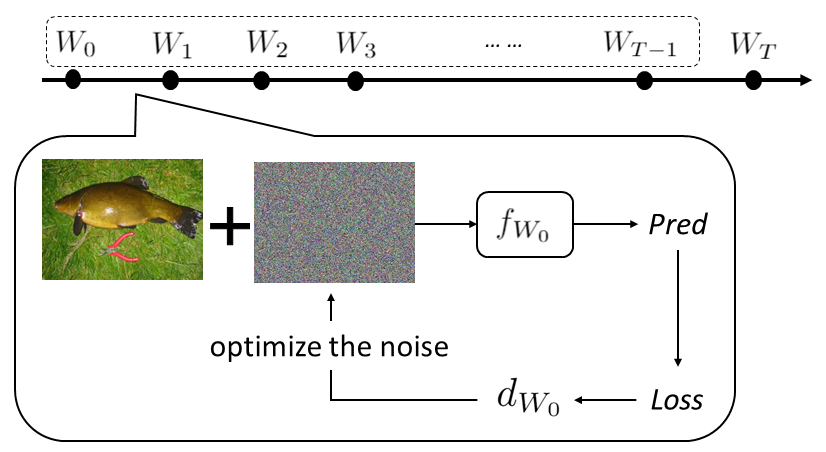}
    \caption{Basic idea of our attacks. 
    The adversary first generates dummy model weights: $W_0, ..., W_{T-1}$ (serving as $\WWW$); and then generates ``adversarial examples'' (serving as $\III$) for each pair of neighboring models.}
    \label{fig:overview}
\end{figure}

Recall that one requirement for a spoof to succeed is that $\Adv$ should spend less cost than 
the PoL generation process described in Algorithm~\ref{alg:creation} (which is $T=E\cdot S$ times of \texttt{update}).
Next, we show how we achieve this.

\subsection{Attack~I}

Our first insight is that there is no need to construct an adversarial example for every pair of neighboring models.
Instead, $\Adv$ could simply update the model from $W_0$ to $W_{T-1}$ using original data points, 
and construct an ``adversarial example'' only from $W_{T-1}$ to $W_T$.
In this case, $\Adv$ only needs to construct a single ``adversarial example'' for the whole attacking process.
Furthermore, $\Adv$ could use a smaller number of steps, denoted as $T'$.

\renewcommand{\algorithmiccomment}[1]{$\triangleright$ #1}
\begin{algorithm}[htb]
\caption{Attack~I}
\label{alg:attack1}
\KwIn {$D$, $f_{W_T}$, $\delta$, $\zeta$, $k$, $E$, $S$}
\KwOut {PoL spoof: $\Proof(\Adv, f_{W_T})=(\WWW, \III, \HHH)$ \\
~~~~~~~~~~~updated dataset: $D$}
$\WWW\leftarrow\{\}$
$\III\leftarrow\{\}$
$\HHH\leftarrow\{\}$

$\WWW.\texttt{append}(\texttt{init}(\zeta))$  \hfill\CommentSty{initialize and append $W_0$}

\For(\hfill\CommentSty{$T' \mod k =0$}){$t = 1  \to T' $}{
    
    $\III.\texttt{append}(\texttt{getBatch}(D))$
   
   \eIf{$t<T'$}{
        $W_{t} \leftarrow \texttt{update}(W_{t-1}, D[\III[t-1]])$
        
        \eIf{$t\mod k = 0$}{
            $\WWW.\texttt{append}(W_t)$ 
        }{
            $\WWW.\texttt{append}(\mathbf{nil})$ 
        }
    }{
        $\texttt{updateDataPoints}(W_{t-1}, W_{T})$
    }
    $\HHH.\texttt{append}(h(D[\III[t-1]]))$
}
~
\SetKwFunction{FMain}{updateDataPoints}
\SetKwProg{Fn}{function}{}{end}

\Fn{\FMain{$W_{t-1}$, $W_{t}$}}{
    
    $W'_{t-1} := W_{t-1}$
    
    $(\XX, \yy) \leftarrow D[\III[t-1]]$
        
    $W'_{t} \leftarrow \texttt{update}(W'_{t-1}, (\XX, \yy))$
        
    \While{$d(W'_{t}, W_{t}) > \delta$}{
        $\mathbf{R} \leftarrow \texttt{zeros}$
                
        $\triangledown_{W'_{t-1}} \leftarrow - \frac{\partial}{\partial W'_{t-1}} L(f_{W'_{t-1}}(\XX+\mathbf{R}), \yy)$
                
        $\mathbb{D}_{t-1} \leftarrow d(W'_{t-1}+\eta \triangledown_{W'_{t-1}}, W_{t})$ + $d(\mathbf{R}, 0)$
                
        $\mathbf{R} \leftarrow \mathbf{R} - \eta' \triangledown_R \mathbb{D}_{t-1}$
                
        $W'_{t} \leftarrow \texttt{update}(W'_{t-1}, (\XX+\mathbf{R}, \yy))$
    }
    $D[\III[t-1]] := (\XX+\mathbf{R}, \yy)$
}
\end{algorithm}

Algorithm~\ref{alg:attack1} shows our first attack.
From $W_0$ to $W_{T'-1}$, it works in the same way as PoL creation (cf. Algorithm~\ref{alg:creation}).
For $W_{T'-1}$, the batch of inputs $(\XX, \yy)$ must be manipulated s.t.:



\begin{center}
    $d(W_{T'}, W_T) \leq \delta.$
\end{center}

Line 22-28 show how $\Adv$ manipulates $\XX$.
Specifically, $\Adv$ first initializes a batch of noise $\RR$ as zeros (line 23).
Then, it feeds $(\XX+\RR)$ to $f_{W'_{t-1}}$ and gets the gradients $\triangledown_{W'_{t-1}}$ (line 25).
Next, $\Adv$ optimizes $\RR$ by minimizing the following distance (line 26-27):
\begin{center}
    $\mathbb{D}_{t-1} \leftarrow d(W'_{t-1}+\eta \triangledown_{W'_{t-1}}, W_{t})$ + $d(\mathbf{R}, 0).$
\end{center}
This distance needs to be differentiable so that $\RR$ can be optimized using standard gradient-based methods\footnote{Specificly, we use L-BFGS for adversarial optimization.}. 
Notice that this optimization requires 2nd order derivatives. We assume that $f_{W}$ is twice differentiable, which holds for most modern machine learning models and tasks.

Clearly, the PoL spoof $\Proof(\Adv, f_{W_T})=(\WWW, \III, \HHH)$ generated by Attack~I can pass the verification process described in Algorithm~\ref{alg:verification}.
It requires {$T'$ times of \texttt{update} (Line~21) plus $N$ times of adversarial optimization (Line~23-27)} (where $N$ is the times that the {\bf while} loop runs).
Recall that our focus is stochastic spoofing:
the PoL proof generated by $\Adv$ is not exactly the same as the one provided by $\Prov$, but can pass the verification. 
Therefore, we can use a $T'$ that is much smaller than $T$.
However, $N$ could be large 
and sometimes even cannot converge.
Next, we show how we optimize the attack so that a small $N$ is able to make the adversarial optimization converge.



\subsection{Attack~II}

The intuition for accelerating  adversarial optimization is to sample the intermediate model weights
in a way s.t.:
\begin{center}
    $d(W_t, W_{t-k}) \leq \delta,~\forall~0< t< T$ and $t \mod k =0$,
\end{center}
This brings at least three benefits:
\begin{enumerate}
    \item  ``Adversarial examples'' become  easier to be optimized.
    \item The $k$ batches of ``adversarial examples'' in each checkpointing interval can be optimized together. (We defer to explain this benefit in Attack~III.)
    \item Intermediate model performance can be guaranteed. (Recall that $\Verif$ might check model performance periodically.) 
\end{enumerate}

\renewcommand{\algorithmiccomment}[1]{$\triangleright$ #1}
\begin{algorithm}[htb]
\caption{Attack~II}
\label{alg:attack2}
\small
\KwIn {$D$, $f_{W_T}$, $\delta$, \textcolor{blue}{$\gamma$}, $\zeta$, $k$, $E$, $S$}
\KwOut {PoL spoof: $\Proof(\Adv, f_{W_T})=(\WWW, \III, \HHH)$ \\
~~~~~~~~~~~updated dataset: $D$}
$\WWW\leftarrow\{\}$
$\III\leftarrow\{\}$
$\HHH\leftarrow\{\}$

\textcolor{blue}{$\WWW.\texttt{append}(\texttt{initW}_0(\zeta, W_T))$}  \hfill\CommentSty{initialize and append $W_0$}

\For(\hfill\CommentSty{$T' \mod k =0$}){$t = 1  \to T' $}{
    
    $\III.\texttt{append}(\texttt{getBatch}(D))$
    
    
    \eIf{$t\mod k = 0$}{
        \eIf(\hfill\CommentSty{no need to append $W_T$}){$t < T'$}{
            \textcolor{blue}{~sample $W_{t}$ s.t., $d(W_{t}, W_{t-k})\leq {\delta}$}
            $\WWW.\texttt{append}(W_t)$ 
        }{
            $W_t := W_T$
        }
        
        $\texttt{updateDataPoints}(W_{t-k}, W_{t})$
        
        \For{$i = (t-k) \to (t-1)$}{
            $\HHH.\texttt{append}(h(D[\III[i]]))$
        }
    }{
        $\WWW.\texttt{append}(\mathbf{nil})$ 
    }
   
}

~
\SetKwFunction{FMain}{updateDataPoints}
\SetKwProg{Fn}{function}{}{end}

\Fn{\FMain{$W_{t-k}$, $W_{t}$}}{
    
    $W'_{t-k} := W_{t-k}$
    
    \For{$i = (t-k) \to (t-1)$}{
        $(\mathbf{X}, \mathbf{y}) \leftarrow D[\III[i]]$
        
        $W'_{i+1} \leftarrow \texttt{update}(W'_i, (\mathbf{X}, \mathbf{y}))$
        
        \While(\hfill){\textcolor{blue}{$d(W'_{i+1}, W'_{i}) > \gamma$}}{
            $\mathbf{R} \leftarrow \texttt{zeros}$
            
            $\triangledown_{W'_i} \leftarrow - \frac{\partial}{\partial W'_i} L(f_{W'_i}(\mathbf{X}+\mathbf{R}), \mathbf{y})$
            
            \textcolor{blue}{ 
            $\mathbb{D}_i \leftarrow d(\triangledown_{W'_i} , 0)$ + $d(\mathbf{R}, 0)$
            }
            
            $\mathbf{R} \leftarrow \mathbf{R} - \eta' \triangledown_R \mathbb{D}_i$
            
            $W'_{i+1} \leftarrow \texttt{update}(W'_i, (\mathbf{X}+\mathbf{R}, \mathbf{y}))$
        }
        $D[\III[i]] := (\mathbf{X}+\mathbf{R}, \mathbf{y})$
    }
}
\end{algorithm}

Algorithm~\ref{alg:attack2} shows Attack~II. We highlight the key differences (compared to Attack~I) in blue.

This time, $\Adv$ initializes $W_0$ via $\texttt{initW}_0$ (line 2), which ensures that $W_0$ follows the given distribution $\zeta$,
and minimizes $d(W_0, W_T)$ at the same time. 
It works as follows:
\begin{enumerate}
    \item Suppose there are $n$ elements in $W_T$,
    $\Adv$ puts these elements into a set $S_1$. Then, $\Adv$ samples $n$ elements: $v_1, ..., v_n$ from the given distribution $\zeta$, and puts them into another set $V_2$.
    \item $\Adv$ finds the largest elements $w$ and $v$ from $S_1$ and $S_2$ respectively.
    Then, $\Adv$ puts $v$ into $W_0$ according
    to $w$'s indices in $W_T$.
    \item $\Adv$ removes $(w, v)$ from $(S_1, S_2)$, and repeats step 2) until $S_1$ and $S_2$ are empty.
\end{enumerate}
Our experimental results show that this process can initialize a $W_0$ that meets our requirements. 

For other $W_t$s ($t>0$), $\Adv$ can initialize them by equally dividing the distance between $W_0$ and $W_T$.
If \revision{the number of steps for spoofing (i.e., $T'$)} is large enough (i.e., there are enough $W_t$s), the condition ``$d(W_t, W_{t-k}) \leq \delta$'' can be trivially satisfied.

Another major change in Attack~II is that $\Adv$ optimizes the noise $\RR$ by minimizing the following distance (line 27):
\begin{center}
    $\mathbb{D}_i \leftarrow d(\triangledown_{W'_i} , 0)$ + $d(\mathbf{R}, 0)$,
\end{center}
and the condition for terminating the adversarial optimization is $d(W'_{i+1}, W'_i) >\gamma$ where $\gamma \ll \delta$ (Line 25).
This guarantees that the model is still close to itself after a single step of update.
Since the distance between $W_{t-k}$ and $W_{t}$ is smaller than $\delta$ after initialization, after $k$ steps of updates, their distance is still smaller than $\delta$: $d(W_t, W'_t) < \delta$.

Interestingly, this change makes the adversarial optimization become easier to converge. 
Recall that in Attack~I, $\Adv$ has to adjust the loss function $L(f_{W'_i}(\mathbf{X}+\mathbf{R}), \mathbf{y})$ to minimize $d(W'_{t-1}+\eta \triangledown_{W'_{t-1}}, W_{t})$. 
This is difficult to achieve because gradient-based training is used to minimize (not adjust) 
the loss function.
Thanks to the new $\mathbb{D}_i$, $\Adv$ can simply minimize the loss function in Attack~II.
In another word, the adversarial optimization process in Attack~II is more close to normal training.
Table~\ref{tab:loss-diff} shows that on CIFAR-10, after 10 steps of adversarial optimization, the loss function decreases from 0.43 to 0.04, and the gradients decrease from 61.13 to 0.12.
Both are small enough to pass the verification.
That is to say, \revision{the number of while loops} $N$ can be as small as 10 in Attack~II.


\begin{table}[htb]
\centering
\caption{The changes of loss and the gradients after 20 steps of adversarial optimization on CIFAR-10}
\begin{tabular}{@{}lll@{}}
\toprule
 & $L(f_{W'_i}(\mathbf{X}+\mathbf{R}), \mathbf{y})$ & $\left \|\triangledown_{W'_i}\right \|^{2}$ \\ \midrule
Before & 0.43 $\pm$ 0.18 &  61.13 $\pm$ 45.86 \\
After & 0.04 $\pm$ 0.01 & {0.12 $\pm$ 0.05}\\ \bottomrule
\end{tabular}
\label{tab:loss-diff}
\end{table}

\revision{\Paragraph{Cost comparison.}
\textlabel{It is easy to see that Attack~II requires}{e:3:1} {$T'$ times of \texttt{update} (Line~24) plus $T' \cdot N$ times of adversarial optimization (Line~26-30)}, where $N=10$.
Each \texttt{update} requires one gradient computation and each adversarial optimization requires three gradient computations.
In total, {Attack~II requires $31T'$ gradient computations}.
As a comparison, generating a PoL proof requires $T= E\cdot S$ gradient computations.
In~\cite{PoL}, they set $E=200$ and $S=390$, hence $T = 78,000$.
Therefore, the cost for Attack~II is smaller than the trainer’s cost as long as $T' < \frac{78,000}{31} \approx 2,516$. 
Our experimental results show that this is more than enough for the spoof to pass the verification (cf. Section~\ref{sec:eval}). 
Next, we show how we further optimize our attack.
}

\subsection{Attack~III}

\renewcommand{\algorithmiccomment}[1]{$\triangleright$ #1}
\begin{algorithm}[htb]
\caption{Attack~III}
\label{alg:attack3}
\small
\KwIn {$D$, $f_{W_T}$, $\delta$, $\gamma$, $\zeta$, $k$, $E$, $S$}
\KwOut {PoL spoof: $\Proof(\Adv, f_{W_T})=(\WWW, \III, \HHH)$ \\
~~~~~~~~~~~updated dataset: $D$}
$\WWW\leftarrow\{\}$
$\III\leftarrow\{\}$
$\HHH\leftarrow\{\}$

$\WWW.\texttt{append}(\texttt{initW}_0(\zeta, W_T))$  \hfill\CommentSty{initialize and append $W_0$}

\For(\hfill\CommentSty{$T' \mod k =0$}){$t = 1  \to T' $}{
    
    $\III.\texttt{append}(\texttt{getBatch}(D))$
    
    \eIf{$t\mod k = 0$}{
        \eIf(\hfill\CommentSty{no need to append $W_T$}){$t < T$}{
            ~sample $W_{t}$ s.t., {$d(W_{t}, W_{t-k})\leq {\delta}$}
            $\WWW.\texttt{append}(W_t)$ 
        }{
            $W_t := W_T$
        }
        
        $\texttt{updateDataPoints}(W_{t-k}, W_{t})$
        
        \For{$i = (t-k) \to (t-1)$}{
            $\HHH.\texttt{append}(h(D[\III[i]]))$
        }
    }{
        $\WWW.\texttt{append}(\mathbf{nil})$ 
    }
}

~
\SetKwFunction{FMain}{updateDataPoints}
\SetKwProg{Fn}{function}{}{end}

\Fn{\FMain{$W_{t-k}$, $W_{t}$}}{
    
    
        \textcolor{blue}{ 
        $(\mathbf{X}, \mathbf{y}) \leftarrow [D[\III[t-k]]...D[\III[t-1]]]$
        }
        
        $W'_{t} \leftarrow \texttt{update}(W_{t-k}, (\mathbf{X}, \mathbf{y}))$
               
        \While{$d(W'_{t}, W_{t}) > \gamma \textcolor{blue}{-\sigma}$}{
            
            $\mathbf{R} \leftarrow \texttt{zeros}$

            \textcolor{blue}{ 
            $\triangledown_{W_{t-k}} \leftarrow - \frac{\partial}{\partial W_{t-k}} L(f_{W_{t-k}}(\mathbf{X}+\mathbf{R}), \mathbf{y})$
            }
            
            \textcolor{blue}{ 
            $\mathbb{D}_{t-k} \leftarrow d(\triangledown_{W_{t-k}} , 0)$ + $d(\mathbf{R}, 0)$
            }
            
            $\mathbf{R} \leftarrow \mathbf{R} - \eta' \triangledown_R \mathbb{D}_{t-k}$
            
            \textcolor{blue}{ 
            $W'_{t} \leftarrow \texttt{update}(W_{t-k}, (\mathbf{X}+\mathbf{R}, \mathbf{y}))$
            }
        }
        \textcolor{blue}{ 
        $[D[\III[t-k]]...D[\III[t-1]]] := (\mathbf{X}+\mathbf{R}, \mathbf{y})$}
}
\end{algorithm}

Algorithm~\ref{alg:attack3} shows Attack III. Again, we highlight the key differences (compared to Attack II) in blue. 
The major change is that $\Adv$ optimizes all $k$ batches of data points  together in \texttt{updateDataPoints}.
The distance between $W'_t$ and $W_t$ should be limited to $\gamma-\sigma$ (where $0<\sigma<\gamma$), instead of $\gamma$. 
We will show the reason in the proof of Corollary~\ref{coro:proof}.

Optimizing all $k$ batches together reduces the complexity to 
{$\frac{T'}{k}$ times of \texttt{update} (Line~22) plus $\frac{T'}{k}\cdot N$ times of adversarial optimization (Line 24-28)}.
At first glance, this will not pass the verification because $\Verif$ will run \texttt{update} for each batch individually. 
\revision{
\textlabel{However, since}{r1:2:1} $W_{t-k}, ..., W_{t-1}$ are all very close to each other, 
taking a gradient with respect to $W_{t-k+j}$ is similar to taking a gradient with respect to $W_{t-k}$. 
Consequently, we can optimize all $k$ batches directly with respect to $W_{t-k}$.
}
The gap only depends on $k$, hence we can make a trade-off. 
Next, we formally prove this argument.  

\begin{corollary}
\label{coro:proof}
Let $(W_{t-k}, W_t)$ be an input to \texttt{updateDataPoints} in Attack III.
Let $\{\hat{W}_{t-k-1},...,\hat{W}_{t}\}$ be the model weights computed by $\Verif$ based on $W_{t-k}$ during PoL verification. 
Assuming the loss function $L(f_{W}(\mathbf{X}),\mathbf{y}) \in C^2(\Omega)$, where $\Omega$ is a closed, convex and connected subset in $\mathbb{R}^n$, and $\{\hat{W}_{t-k-1},...,\hat{W}_{t}\} \in \Omega$. Then, 
$$||\hat{W}_t-W_t||\leq \eta^2\alpha\beta \frac{(k-1)(k-2)}{2} + \gamma - \sigma,$$
where $\alpha$ and $\beta$ are the upper bounds of first and second order derivative\footnote{Empirically, $\alpha$ is 0.03 in average and $\beta$ is 0.025 in average.} of $L(f_{W}(\mathbf{X}),\mathbf{y})$.
\end{corollary}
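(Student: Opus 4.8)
\emph{Proof idea.} The plan is to compare two objects via the triangle inequality:
\[
\|\hat{W}_t - W_t\|\ \le\ \|\hat{W}_t - W'_t\| + \|W'_t - W_t\|,
\]
where $W'_t := \texttt{update}(W_{t-k}, (\mathbf{X}+\mathbf{R}, \mathbf{y}))$ is the weight returned by the \textbf{while} loop of \texttt{updateDataPoints} in Attack~III, i.e.\ the result of a \emph{single} gradient step taken from $W_{t-k}$ on the concatenation of the $k$ perturbed batches $D[\III[t-k]], \dots, D[\III[t-1]]$. The second term comes for free: the loop terminates only once $d(W'_t, W_t) \le \gamma - \sigma$, hence $\|W'_t - W_t\| \le \gamma - \sigma$. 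So the whole statement reduces to bounding $\|\hat{W}_t - W'_t\|$: the gap between the verifier, who replays the $k$ batches \emph{one at a time} starting from $W_{t-k}$, and the adversary, who folds them into one update.

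First I would write both trajectories explicitly. Let $(\mathbf{X}_j, \mathbf{y}_j)$ be the $j$-th of the $k$ perturbed batches and $g_j(W) := \nabla_W L(f_W(\mathbf{X}_j), \mathbf{y}_j)$. The verifier's recursion is $\hat{W}_{t-k+j} = \hat{W}_{t-k+j-1} - \eta\, g_j(\hat{W}_{t-k+j-1})$ for $j = 1, \dots, k$, with $\hat{W}_{t-k} = W_{t-k}$; the adversary's single concatenated step decomposes, by additivity of the loss over the batch, as $W'_t = W_{t-k} - \eta \sum_{j=1}^{k} g_j(W_{t-k})$. Telescoping the verifier's recursion and subtracting,
\[
\hat{W}_t - W'_t\ =\ -\eta \sum_{j=1}^{k} \big( g_j(\hat{W}_{t-k+j-1}) - g_j(W_{t-k}) \big).
\]
Here the hypothesis $L \in C^2(\Omega)$ with $\Omega$ closed, convex, connected and containing all the relevant iterates is used: the mean value inequality gives $\|g_j(\hat{W}_{t-k+j-1}) - g_j(W_{t-k})\| \le \beta\, \|\hat{W}_{t-k+j-1} - W_{t-k}\|$, where $\beta$ bounds $\|\nabla_W^2 L\|$. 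Since $\|g_j\| \le \alpha$, each verifier step displaces the weights by at most $\eta\alpha$, so $\|\hat{W}_{t-k+j-1} - W_{t-k}\| \le (j-1)\eta\alpha$ — in particular the $j=1$ term is zero and the contributions grow linearly in $j$. Summing this (at most) arithmetic progression over the $k$ updates, and accounting for exactly which of the gradient-evaluation points are actually displaced from $W_{t-k}$, yields the coefficient in the statement: $\|\hat{W}_t - W'_t\| \le \eta^2 \alpha \beta\, \frac{(k-1)(k-2)}{2}$. Adding the $\gamma - \sigma$ term finishes the proof.

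This also explains the role of $\sigma$: Attack~III caps the inner loop at $\gamma - \sigma$ rather than $\gamma$ so that this $O(\eta^2 k^2)$ linearization error has room to be absorbed — choosing $\sigma \ge \eta^2 \alpha \beta\, \frac{(k-1)(k-2)}{2}$ makes $\|\hat{W}_t - W_t\| \le \gamma \le \delta$, which recovers, per checkpointing interval, the reproduction guarantee of Attack~II. I expect the main difficulty to lie not in any single estimate but in the bookkeeping around the single-step-versus-$k$-step comparison: one must (i) justify that an update on the concatenated batch equals the sum of the per-batch updates evaluated at $W_{t-k}$ (a loss-additivity / learning-rate normalization point, needed so that the first-order terms cancel and only the $\eta^2$ remainder survives), (ii) check that every weight appearing in the telescoping — each $\hat{W}_{t-k+j}$ and every point on the segments used for the mean value inequality — stays inside $\Omega$, and (iii) count precisely how far each gradient-evaluation point sits from $W_{t-k}$ so that the triangular sum collapses to $\frac{(k-1)(k-2)}{2}$ rather than a looser quadratic constant. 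With those in hand the inequality is exactly the computation above.
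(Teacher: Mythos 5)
Your proposal follows essentially the same route as the paper's proof: the triangle inequality through $W'_t$ (with the while-loop termination condition supplying the $\gamma-\sigma$ term), the observation that the single concatenated update with learning rate $\eta''=k\eta$ equals $W_{t-k}-\eta\sum_j \triangledown_j(W_{t-k})$ so the first-order terms cancel, the finite-increment (mean value) bound via the second-derivative bound $\beta$, the $\eta\alpha$-per-step displacement of the verifier's iterates, and the summation of the resulting arithmetic progression. One bookkeeping remark: with your (correct) displacement bound $\|\hat{W}_{t-k+j-1}-W_{t-k}\|\leq (j-1)\eta\alpha$ the sum is $\sum_{j=1}^{k}(j-1)=\frac{k(k-1)}{2}$, whereas the paper lands on $\frac{(k-1)(k-2)}{2}$ by counting the $j$-th evaluation point as displaced by one step fewer; so your accounting gives the same $O(\eta^2\alpha\beta k^2)$ conclusion with a marginally larger constant, and the condition on $\sigma$ should be adjusted accordingly.
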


\begin{proof}
Let $\XX=[\xx_1, \xx_2, ..., \xx_k]^T$ be the $k$ batches used to update $W_{t-k}$. 
Denote 
$$L_i(W) =L(f_W(\xx_i),\yy_i)\in C^2(\Omega),$$
$$\triangledown_i(W) = \frac{\partial}{\partial W}L_i\in C^1(\Omega),$$
$$\triangledown'_i(W) = \frac{\partial^2}{\partial W^2}L_i\in C^0(\Omega).$$
Then, $||\triangledown_i(W)||<\alpha$ and $||\triangledown'_i(W)||<\beta$.

In Attack III, (Line 22 of Algorithm~\ref{alg:attack3}), $W'_t$ is calculated as 
\begin{equation*}
    \begin{aligned}
    W'_t =&~W_{t-k}-\frac{\eta''}{k}(
       \triangledown_1(W_{t-k})+\triangledown_2(W_{t-k})+...+\triangledown_k(W_{t-k})
    )
    \end{aligned}
\end{equation*}
Whereas, in PoL verification (Line 29 of Algorithm~\ref{alg:verification}),
\begin{equation*}
    \begin{aligned}
    \hat{W}_{t-k+1} =&~W_{t-k}-\eta\triangledown_1(W_{t-k}) \\
    \hat{W}_{t-k+2} =&~\hat{W}_{t-k+1}-\eta\triangledown_2(\hat{W}_{t-k+1}) \\
    ...\\
    \hat{W}_t =&~\hat{W}_{t-1}-\eta\triangledown_k(\hat{W}_{t-1})
    \end{aligned}
\end{equation*}
It is identical to
\begin{equation*}
    \begin{aligned}
    \hat{W}_t = &W_{t-k}-\eta(
     \triangledown_1(W_{t-k})+\triangledown_2(\hat{W}_{t-k+1})+...+\triangledown_k(\hat{W}_{t-1})
       )
    \end{aligned}
\end{equation*}
If $\Adv$ sets $\eta'' = k\eta$, then
\begin{equation*}
    \begin{aligned}
       \hat{W}_t -W'_t = \eta[&(\triangledown_2(W_{t-k})-\triangledown_2(\hat{W}_{t-k+1})+\\&
       (\triangledown_3(W_{t-k})-\triangledown_3(\hat{W}_{t-k+2})+...+\\
       &(\triangledown_k(W_{t-k})-\triangledown_k(\hat{W}_{t-1})]
    \end{aligned}
\end{equation*}
Assuming $[\hat{W}_{t-k+l},W_{t-k}]=\{W\in \mathbb{R}^n,W=\hat{W}_{t-k+l}+\theta h,0\leq \theta \leq 1\}$ is a closed set, and $\triangledown_i(W)\in C^1(\Omega)$. 
Based on the finite-increment theorem~\cite{book-1469653}, we have
    \begin{equation*}
    \begin{aligned}
||\triangledown_i(W_{t-k}) - \triangledown_i(\hat{W}_{t-k+l})||&\leq \sup_{W}||\frac{\partial\triangledown_i(W)}{\partial W}||\cdot||h|| \\&\leq  \beta||W_{t-k}-\hat{W}_{t-k+l}||
        \end{aligned}
   \end{equation*}
Given 
    \begin{equation*}
    \begin{aligned}
||W_{t-k}-\hat{W}_{t-k+l}|| &=\eta ||\triangledown_1(W_{t-k}) + \triangledown_2(\hat{W}_{t-k+1}) + ... \\ &~~~~+  \triangledown_{l-1}(\hat{W}_{t-k+l-1})|| \\ 
&\leq (l-1) \eta \alpha,
        \end{aligned}
   \end{equation*}
we have
$$||\triangledown_i(W_{t-k}) - \triangledown_i(\hat{W}_{t-k+l})|| \leq 
(l-1)\eta \alpha \beta.$$
Then,
\begin{equation*}
\begin{aligned}
          \hat{W}_t -W'_t  &\leq 
          \eta^2\alpha\beta\sum_{l=1}^{k-1}{(l-1)} \\
          &= \eta^2\alpha\beta \frac{(k-1)(k-2)}{2}
\end{aligned}
\end{equation*}

Recall that $d(W'_t,W_t) \leq \gamma - \sigma$ (Line 23 in Algorithm~\ref{alg:attack3}). 
Then, 
\begin{equation*}
\begin{aligned}
    ||\hat{W}_t-W_t|| &= ||\hat{W}_t-W'_t+W'_t-W_t||\\
    &\leq ||\hat{W}_t-W'_t|| + ||W'_t-W_t||\\
    &\leq \eta^2\alpha\beta \frac{(k-1)(k-2)}{2} + \gamma - \sigma
    \end{aligned}
\end{equation*}


\end{proof}

Therefore, Attack III can pass the verification if we set $\sigma > \eta^2\alpha\beta \frac{(k-1)(k-2)}{2}$.

\revision{\Paragraph{Cost comparison.}
\textlabel{Recall that the complexity for Attack~III is}{e:3:2} $\frac{T'}{k}$ times of \texttt{update} plus $\frac{T'}{k} \cdot N$ times of adversarial optimization, where $N=10$;
each \texttt{update} requires one gradient computation and each adversarial optimization requires three gradient computations.
In total, {Attack~III requires $31\frac{T'}{k}$ gradient computations}.
Given that generating a PoL proof requires requires $T = 78,000$ gradient computations,
the cost for Attack~III is smaller than the trainer’s cost as long as $\frac{T'}{k} < \frac{78,000}{31} \approx 2,516$. 
In our experiments, we show Attack~III can pass the verification when we set $k=100$ (cf. Section~\ref{sec:eval}).
Then, $T'<251,600$ is the condition for the cost of Attack~III to be smaller than the trainer’s cost, making the parameter setting more flexible than Attack~II.
Notice that our experiments did not show such good results, because $k$ batches of samples  exceeds the memory size and we have to load them separately.
Nevertheless, the results of Attack~III are still much better than Attack~II.
}

\begin{figure*}[ht]
    \centering
    \subfigure[Normalized reproduction error in $l_1$]{
    \includegraphics[width=0.31\linewidth]{ 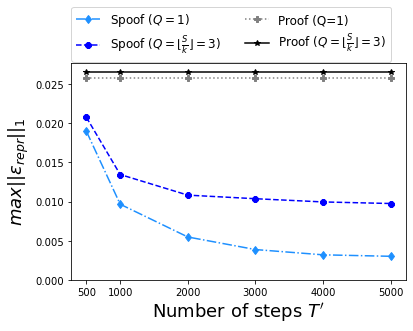}\label{fig:2.1}}
    \subfigure[Normalized reproduction error in $l_2$]{
    \includegraphics[width=0.31\linewidth]{ 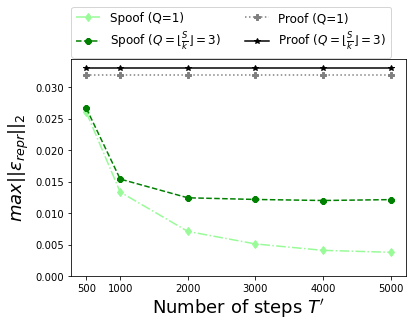}\label{fig:2.2}}
    \subfigure[Normalized reproduction error in $l_{\infty}$]{
    \includegraphics[width=0.31\linewidth]{ 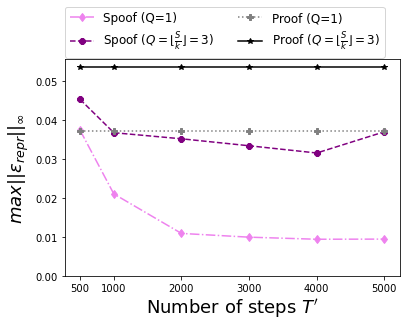}\label{fig:2.3}}
    \\
    \subfigure[Normalized reproduction error in $cos$]{
    \includegraphics[width=0.31\linewidth]{ 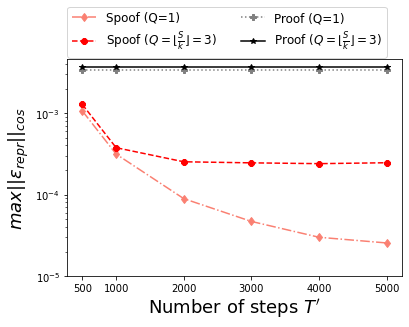}\label{fig:2.4}}
    \subfigure[Spoof generation time.]{
    \includegraphics[width=0.31\linewidth]{ 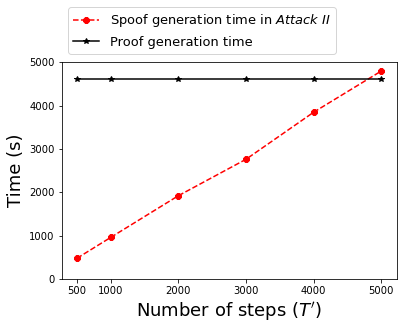}\label{fig:2.5}}
    \subfigure[Spoof size.]{
    \includegraphics[width=0.31\linewidth]{ 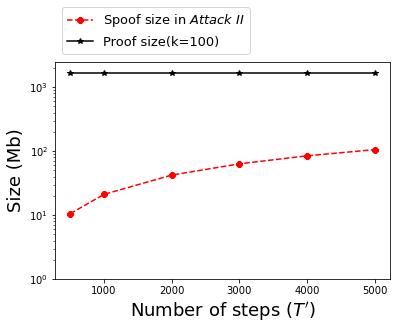}\label{fig:2.6}} 
   \caption{Attack II on CIFAR-10}
    \label{fig:Attack2_cifar10_eps}
\end{figure*}

\begin{figure*}
    \centering
        \subfigure[Normalized reproduction error in $l_1$]{
    \includegraphics[width=0.31\linewidth]{ 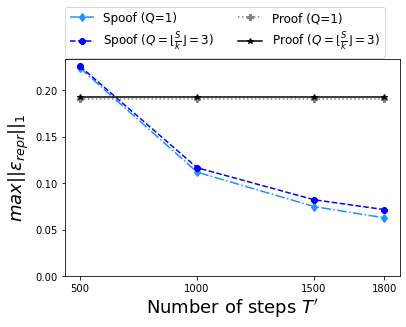}\label{fig:3.1}}
    \subfigure[Normalized reproduction error in $l_2$]{
    \includegraphics[width=0.31\linewidth]{ 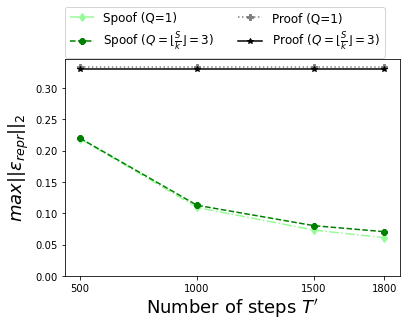}\label{fig:3.2}}
    \subfigure[Normalized reproduction error in $l_{\infty}$]{
    \includegraphics[width=0.31\linewidth]{ 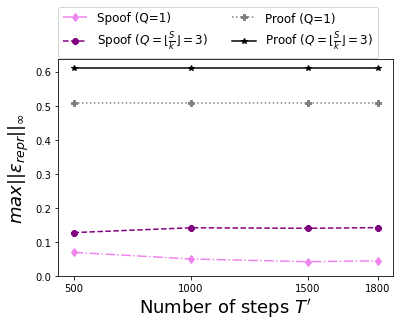}\label{fig:3.3}}
    \\
    \subfigure[Normalized reproduction error in $cos$]{
    \includegraphics[width=0.31\linewidth]{ 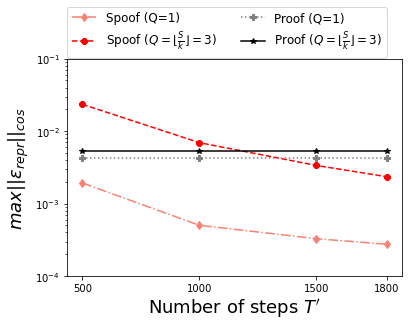}\label{fig:3.4}}
    \subfigure[Spoof generation time.]{
    \includegraphics[width=0.31\linewidth]{ 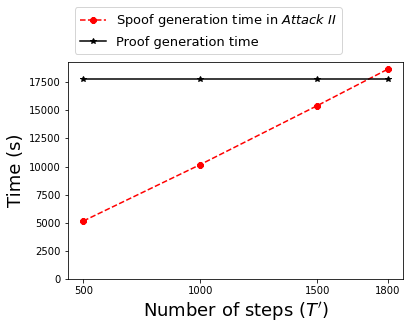}\label{fig:3.5}}
    \subfigure[Spoof size.]{
    \includegraphics[width=0.31\linewidth]{ 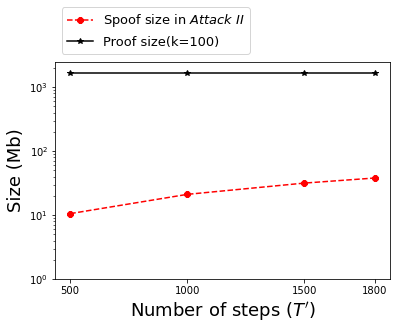}\label{fig:3.6}}    
   \caption{Attack II on CIFAR-100}
    \label{fig:Attack2_cifar100_eps}
\end{figure*}

\begin{figure*}
    \centering
    \subfigure[Normalized reproduction error in $l_1$]{
    \includegraphics[width=0.31\linewidth]{ 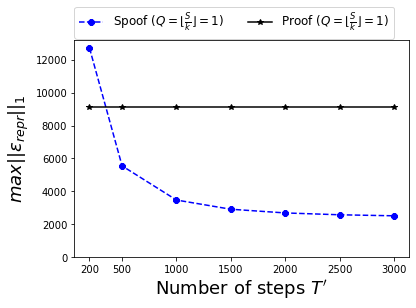}\label{fig:4.1}}
    \subfigure[Normalized reproduction error in $l_2$]{
    \includegraphics[width=0.31\linewidth]{ 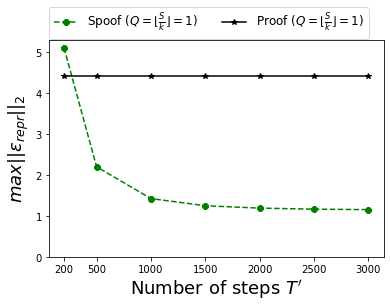}\label{fig:4.2}}
    \subfigure[Normalized reproduction error in $l_{\infty}$]{
    \includegraphics[width=0.31\linewidth]{ 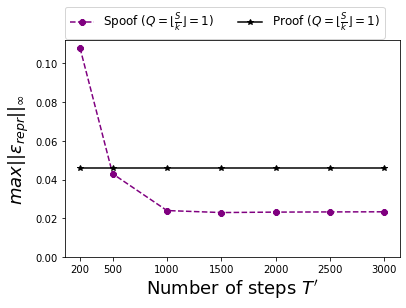}\label{fig:4.3}}
    \\
    \subfigure[Normalized reproduction error in $cos$]{
    \includegraphics[width=0.31\linewidth]{ 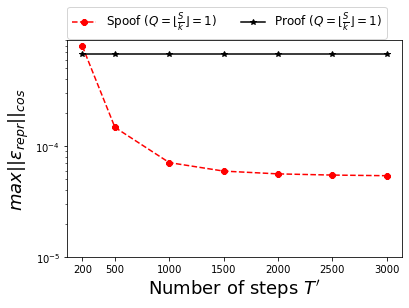}\label{fig:4.4}}
    \subfigure[Spoof generation time.]{
    \includegraphics[width=0.31\linewidth]{ 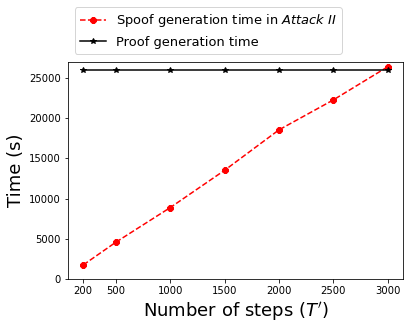}\label{fig:4.5}}
    \subfigure[Spoof size.]{
    \includegraphics[width=0.31\linewidth]{ 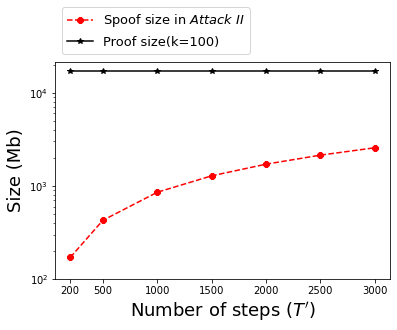}\label{fig:4.6}}        
   \caption{Attack II on ImageNet}
    \label{fig:Attack2_imagenet_eps}
\end{figure*}


\section{Evaluation}
\label{sec:eval}

In this section, we 
evaluate our attacks in two metrics:
\begin{itemize}
    \item {\bf Reproducibility.} We need to show that the normalized reproduction errors (cf. Section~\ref{sec:verification}) in $l_1$, $l_2$, $l_{\infty}$ and $cos$ introduced by our PoL spoof are smaller than  those introduced by a legitimate PoL proof.
    That means, as long as \deleted{$\Proof(\Prov, f_{W_T})$}\revision{the PoL proof} can pass the verification, \deleted{$\Proof(\Adv, f_{W_T})$}\revision{our spoof} can pass the verification as well.
    \item {\bf Spoof cost.} Recall that a successful PoL spoof requires the attacker to spend less computation and storage than the prover (cf. Section~\ref{sec:definition}).
    Therefore, we need to show that the generation time and size of \deleted{$\Proof(\Adv, f_{W_T})$}\revision{the spoof} are smaller than those of \deleted{$\Proof(\Prov, f_{W_T})$}\revision{the proof}.
    
\end{itemize}

\subsection{Setting}

Following the experimental setup in~\cite{PoL},
we evaluate our attacks for ResNet-20~\cite{He2016resnet} and ResNet-50~\cite{He2016resnet} on CIFAR-10~\cite{krizhevsky2009learning} and CIFAR-100~\cite{krizhevsky2009learning} respectively. 
Each of them contains 50,000 training images and 10,000 testing images; and each image is of size 32×32×3. 
CIFAR-10 only has 10 classes and CIFAR-100 has 100 classes.

We reproduced the results in~\cite{PoL} as baselines for our attacks. 
Namely, we generate \deleted{$\Proof(\Prov, f_{W_T})$}\revision{the PoL proof} by training both models for 200 epochs with 390 steps in each epoch (i.e., $E=200$, $S=390$) with batch sizes being 128.
The authors in~\cite{PoL} suggest to set \revision{the number of steps ($k$) in each checkpointing interval} equal to $S$, i.e., $k=S$, but in that case \revision{the verifier can only verify one model per epoch}, i.e., $Q=1$ (cf. Algorithm~\ref{alg:verification}).
Therefore, we set  $k$ as 100.
\revision{\textlabel{Under this setting, the models achieve}{e:1:1} 90.13\% accuracy on CIFAR-10 and 77.03\% accuracy on CIFAR-100.}
All reproduced results are consistent with the results reported in~\cite{PoL}. 

Since our attacks are stochastic spoofing, where \deleted{$\Proof(\Adv, f_{W_T})$}\revision{the spoof} is {\em not} required to be exactly the same as \deleted{$\Proof(\Prov, f_{W_T})$}\revision{the proof}, we could use different $T$, $k$ and batch size in our attacks.
\revision{\textlabel{Nevertheless}{r2:3:4}, we still set $k$ as 100 and batch size as 128 for all of our experiments, to show that our attacks are still valid even in a tough hyperparameter setting.}

\revision{
\textlabel{
Recall that}{e:5:1} $\Verif$ only verifies
$Q < \frac{S}{k}$ largest updates for each epoch (cf. Algorithm~\ref{alg:verification}).
Given that $S=390$ and $k=100$, we have $Q \leq 3$.
In~\cite{PoL}, Jia et al. claim that $Q=1$ would be sufficient for the verifier to detect spoofing, and they
use $Q=1$ for their experiments. 
To be cautious, we 
run experiments with both $Q = \left\lfloor \frac{S}{k}  \right\rfloor = 3$ and $Q=1$.
}

\textlabel{To show that our attacks can scale to}{r2:3:1} more complex datasets, we evaluate our attacks on ImageNet~\cite{deng2009imagenet}.
We pick the first 10 classes from ImageNet;  it contains 13,000 training images and 500 testing images, and each image is resized as 128$\times$128.
We train ResNet-18~\cite{He2016resnet} with $E=200$, $S=101$, $k=100$ and batch size being 128; it achieves 71\% accuracy.
In this case, $Q$ can only be one.
We again set $k=100$ and batch size as 128 for our attacks on ImageNet.

\revision{
\textlabel{
All experiments}{e:4:1} were repeated 5 times with different  random initializations, and the
averages are reported.}

\subsection{results}

\Paragraph{Attack I.}
\revision{
Table~\ref{tab:attack~I} \textlabel{shows}{r1:3:1} the evaluation results of Attack~I on CIFAR-10.
It shows that after a similar amount of training time, the normalized reproduction errors introduced by PoL spoof are significantly larger than those introduced by PoL proof.
This is unsurprising since Attack~I is just a strawman attack.
Next, we focus on evaluating our real attacks: Attack~II and Attack~III.
}

\begin{table}[htb]
\centering
\caption{Attack~I on CIFAR-10 and CIFAR-100}
\begin{tabular}{c|ll|ll}
\multicolumn{1}{l|}{} & \multicolumn{2}{c|}{CIAFR-10} & \multicolumn{2}{c}{CIFAR-100} \\ \hline
                & Spoof         & Proof         & Spoof          & Proof         \\
$l_1$           &  0.2152       & 0.0265        &  0.7558       & 0.1911                   \\
$l_2$           &  0.2796       & 0.0333        &  0.6552       & 0.3330               \\
$l_{\infty}$    &  0.2291       & 0.0283        & 0.1518        & 0.5075               \\
$cos$           &  0.0758       & 0.0038        & 0.1345        & 0.0043               \\
{Time (s)}      &  4,591         & 4,607          & 18,307         & 17,756              
\end{tabular}
\label{tab:attack~I}
\end{table}

\Paragraph{Attack II.}
\fig~\ref{fig:Attack2_cifar10_eps} shows the evaluation results of Attack II on CIFAR-10. 
The results show that the normalized reproduction errors introduced by \deleted{$\Proof(\Adv, f_{W_T})$}\revision{the PoL spoof} are always smaller than those introduced by \deleted{$\Proof(\Prov, f_{W_T})$}\revision{the PoL proof} in $l_1$, $l_2$ and $l_{cos}$ (\fig~\ref{fig:2.1}, \ref{fig:2.2} and \ref{fig:2.4}).
For $l_{\infty}$, it requires $T'>1,000$ for \deleted{$\Proof(\Adv, f_{W_T})$}\revision{the spoof} to be able to pass the verification  (\fig~\ref{fig:2.3}).
On the other hand, when $T'>4,000$, the generation time of \deleted{$\Proof(\Adv, f_{W_T})$}\revision{the spoof} is larger than that of \deleted{$\Proof(\Prov, f_{W_T})$}\revision{the proof} (\fig~\ref{fig:2.5}).
That means {$1,000<T'<4,000$ would be the condition for Attack II to succeed on CIFAR-10}.
Notice that the spoof size is always smaller than the proof size.


\fig~\ref{fig:Attack2_cifar100_eps} shows that {$1,300<T'<1,700$ is the condition for Attack II to be successful on CIFAR-100}.
\revision{
\fig~\ref{fig:Attack2_imagenet_eps} \textlabel{shows}{r2:3:2} that {$500<T'<3,000$ is the condition for Attack II to succeed on ImageNet}.
}

\begin{figure*}[ht]
    \centering
    \subfigure[Normalized reproduction error in $l_1$.]{
    \includegraphics[width=0.31\linewidth]{ 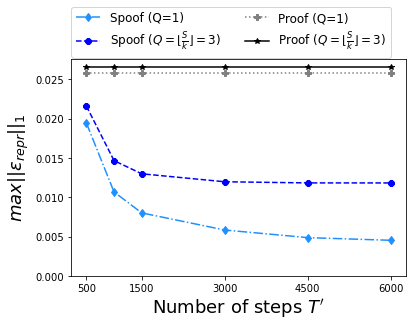}\label{fig:6.1}}
    \subfigure[Normalized reproduction error in $l_2$.]{
    \includegraphics[width=0.31\linewidth]{ 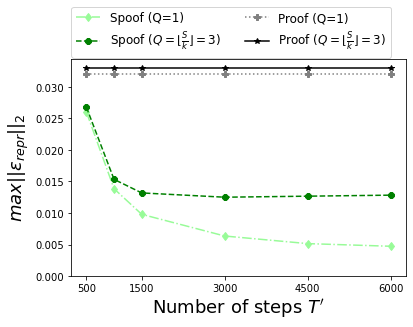}\label{fig:6.2}}
    \subfigure[Normalized reproduction error in $l_{\infty}$.]{
    \includegraphics[width=0.31\linewidth]{ 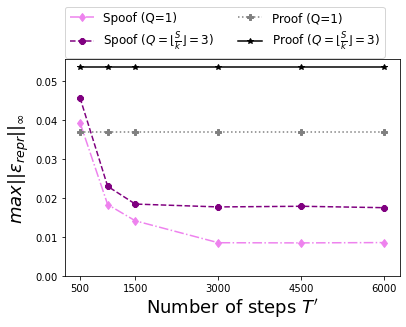}\label{fig:6.3}}
    \subfigure[Normalized reproduction error in $cos$.]{
    \includegraphics[width=0.31\linewidth]{ 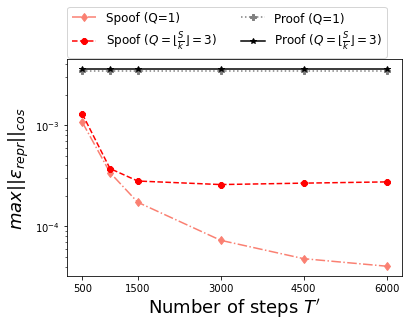}\label{fig:6.4}}
    \subfigure[Spoof generation time.]{
    \includegraphics[width=0.31\linewidth]{ 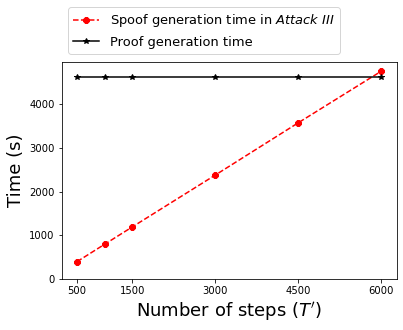}\label{fig:6.5}}
    \subfigure[Spoof Size.]{
    \includegraphics[width=0.31\linewidth]{ 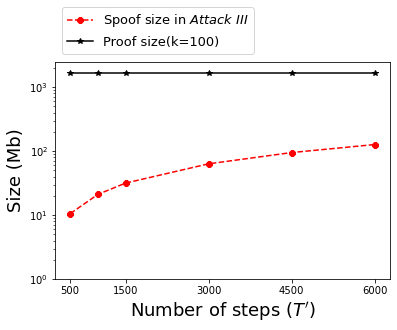}\label{fig:6.6}}
    
    \caption{Attack III on CIFAR-10.}
    \label{fig:Attack3_cifar10_eps}
\end{figure*}

\begin{figure*}
    \centering
        \subfigure[Normalized reproduction error in $l_1$.]{
    \includegraphics[width=0.31\linewidth]{ 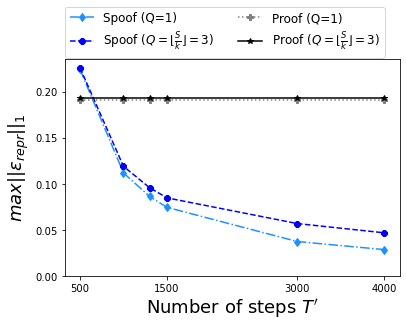}\label{fig:7.1}}
    \subfigure[Normalized reproduction error in $l_2$.]{
    \includegraphics[width=0.31\linewidth]{ 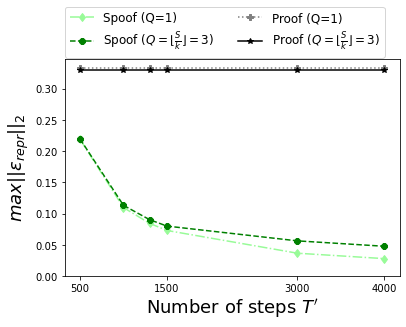}\label{fig:7.2}}
    \subfigure[Normalized reproduction error in $l_{\infty}$.]{
    \includegraphics[width=0.31\linewidth]{ 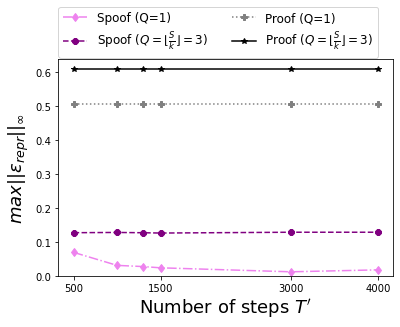}\label{fig:7.3}}
    \subfigure[Normalized reproduction error in $cos$.]{
    \includegraphics[width=0.31\linewidth]{ 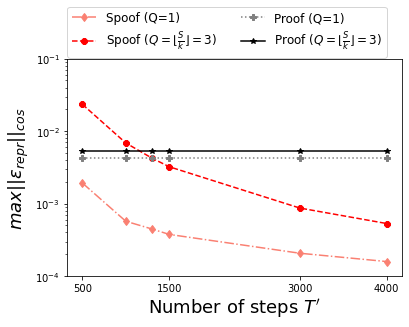}\label{fig:7.4}}
    \subfigure[Spoof generation time.]{
    \includegraphics[width=0.31\linewidth]{ 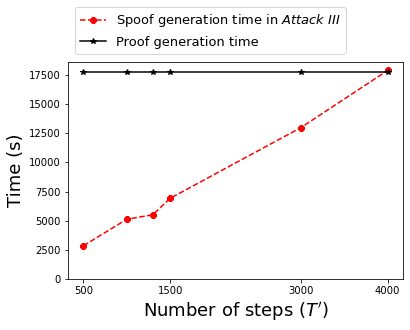}\label{fig:7.5}}
    \subfigure[Spoof Size.]{
    \includegraphics[width=0.31\linewidth]{ 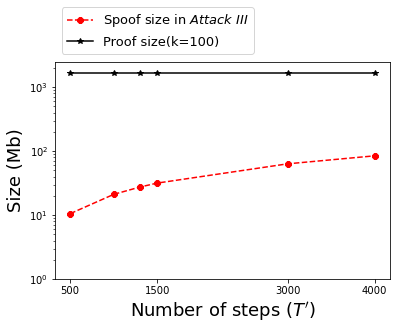}\label{fig:7.6}}
    \caption{Attack III on CIFAR-100. }
    \label{fig:Attack3_cifar100_eps}
\end{figure*}

\begin{figure*}
    \centering
    \subfigure[Normalized reproduction error in $l_1$.]{
    \includegraphics[width=0.31\linewidth]{ 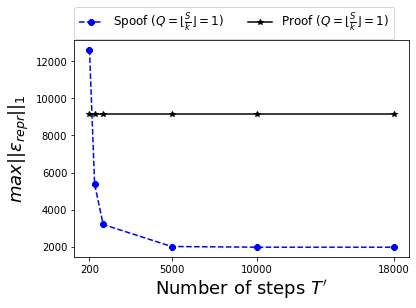}\label{fig:8.1}}
    \subfigure[Normalized reproduction error in $l_2$.]{
    \includegraphics[width=0.31\linewidth]{ 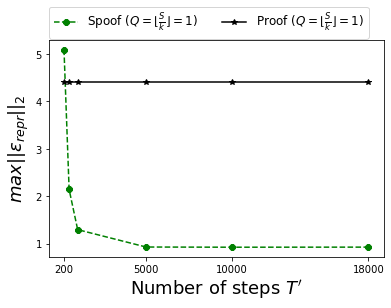}\label{fig:8.2}}
    \subfigure[Normalized reproduction error in $l_{\infty}$.]{
    \includegraphics[width=0.31\linewidth]{ 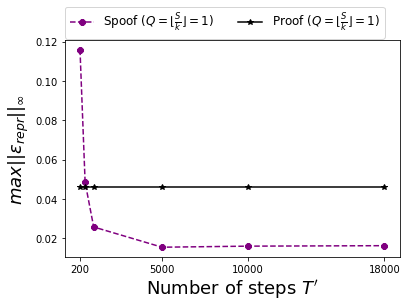}\label{fig:8.3}}
    \subfigure[Normalized reproduction error in $cos$.]{
    \includegraphics[width=0.31\linewidth]{ 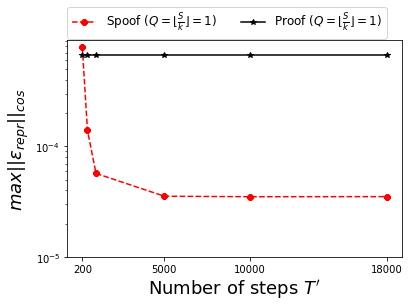}\label{fig:8.4}}
    \subfigure[Spoof generation time.]{
    \includegraphics[width=0.31\linewidth]{ 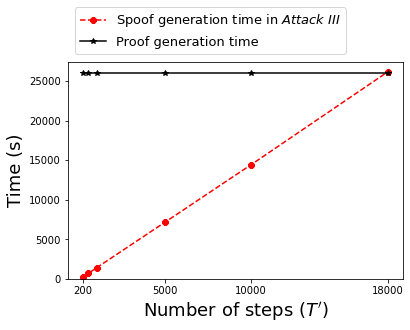}\label{fig:8.5}}
    \subfigure[Spoof Size.]{
    \includegraphics[width=0.31\linewidth]{ 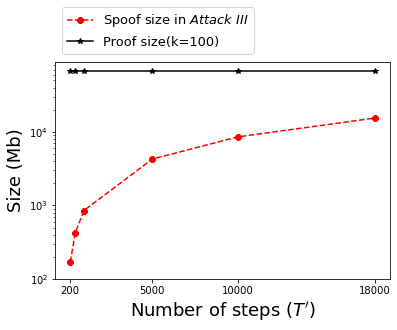}\label{fig:8.6}}
    
    \caption{Attack III on ImageNet.}
    \label{fig:Attack3_imagenet_eps}
\end{figure*}

\Paragraph{Attack III.}
\fig~\ref{fig:Attack3_cifar10_eps} shows the results of Attack III on CIFAR-10. 
The results show that the normalized reproduction errors introduced by \deleted{$\Proof(\Adv, f_{W_T})$}\revision{the spoof} are always smaller than those introduced by \deleted{$\Proof(\Prov, f_{W_T})$}\revision{the proof} in all 4 distances when $T'>1,000$.
In terms of spoof generation time, the number of steps $T'$ can be as large as 6,000.
That means {the condition for Attack III to be successful on CIFAR-10 is $1,000<T'<6,000$}.


\fig~\ref{fig:Attack3_cifar100_eps} shows that {the condition for Attack III to succeed in CIFAR-100 is $1,500<T'<4,000$.}
\revision{
\fig~\ref{fig:Attack3_imagenet_eps} \textlabel{shows}{r2:3:3} that {$500<T'<18,000$ is the condition for Attack III to be successful on ImageNet}.
}


In~\cite{PoL}, the authors suggest to check model performance periodically.
Therefore, we need to make sure that the performance of the intermediate models generated by our attacks follow the same tend as those in  \deleted{$\Proof(\Prov, f_{W_T})$}\revision{the proof}.
We can achieve this by adjusting the extent of perturbations on $W_T$ (Line 7 in Algorithm~\ref{alg:attack3}).
Specifically, we can add a large extent of perturbations when $T'$ is small and add a small extent of perturbations when $T'$ is large.
\fig~\ref{fig:intermediate_model_accuracyl} (in Appendix) shows the model performance in both CIFAR-10 and CIFAR-100.
The $x$-axis presents the progress of training. 
For example when $x=0.2$, the corresponding $y$ represents the $0.2\cdot T$-th model performance in \deleted{$\Proof(\Prov, f_{W_T})$}\revision{the proof} and $0.2\cdot T'$-th model performance in \deleted{$\Proof(\Adv, f_{W_T})$}\revision{the spoof}.
It shows that the model performance in \deleted{$\Proof(\Prov, f_{W_T})$}\revision{the proof} and \deleted{$\Proof(\Adv, f_{W_T})$}\revision{the spoof} are in similar trends.

\revision{
\Paragraph{Non-overlapping datasets.}
In~\cite{PoL}, \textlabel{it is assumed that}{r3:1:1} $\Adv$ has full access to the training dataset and can modify it.
An implicit assumption is that $\Verif$ does not know the dataset beforehand, otherwise the attack can be easily defended by checking the integrity of the dataset.
This assumption is realistic.
Consider the scenario where two hospitals share data with each other, so that they can train models (separately) for online diagnosis. 
Suppose hospital-A trains a good model, and hospital-B (which is the attacker) wants to claim the ownership of this model. 
Then, hospital-A provides a PoL proof and hospital-B provides a PoL spoof. 
In this example, $\Adv$ knows the training dataset but $\Verif$ does not.
}

\revision{
\textlabel{However}{r3:2:1}, $\Verif$ could still check if one dataset is an adversarial perturbation of the other.
This may introduce a new arms race: $\Adv$ would need to evade both PoL and this second verifier.
\textlabel{To this end}{r3:3:1}, we evaluate the effectiveness of our attacks on two non-overlapping datasets. 
Namely, we split the dataset CIFAR-10 into two non-overlapping datasets: $D_1$ and $D_2$.
$\Prov$ generates the PoL proof from $D_1$ and $\Adv$ generates the PoL spoof from $D_2$.
In this case, $\Verif$ can never know which dataset is an adversarial perturbation.
\fig~\ref{fig:Attack2_non_overlapping_eps} and \fig~\ref{fig:Attack3_non_overlapping_eps} show that we can still spoof PoL when the two training datasets are non-overlapping.
This is intuitively true; 
even for the same dataset, the $\mathtt{getBatch}$ function randomly samples data points, hence the data points used for spoofing are w.h.p. different from those used for generating PoL. 
That means our attack could always succeed with a different dataset. 
}

\begin{figure*}
    \centering
    \subfigure[Normalized reproduction error in $l_1$]{
    \includegraphics[width=0.31\linewidth]{ 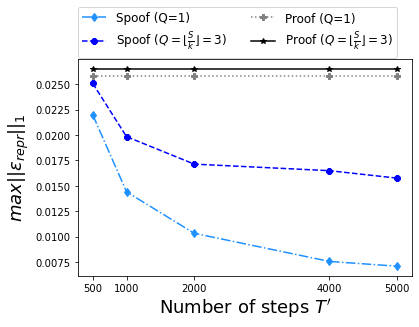}\label{fig:5.1}}
    \subfigure[Normalized reproduction error in $l_2$]{
    \includegraphics[width=0.31\linewidth]{ 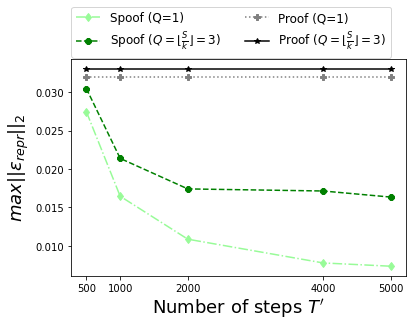}\label{fig:5.2}}
    \subfigure[Normalized reproduction error in $l_{\infty}$]{
    \includegraphics[width=0.31\linewidth]{ 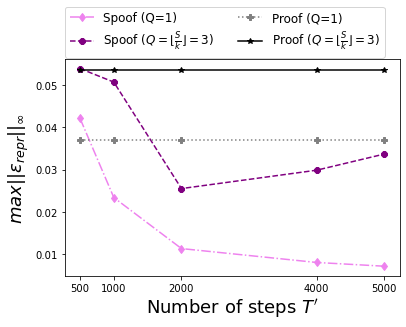}\label{fig:5.3}}
    \\
    \subfigure[Normalized reproduction error in $cos$]{
    \includegraphics[width=0.31\linewidth]{ 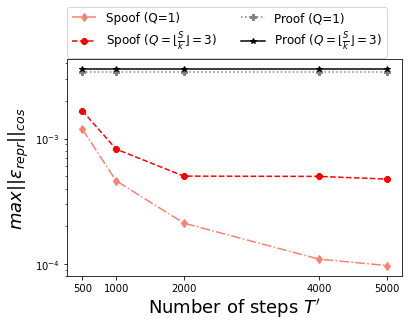}\label{fig:5.4}}
    \subfigure[Spoof generation time.]{
    \includegraphics[width=0.31\linewidth]{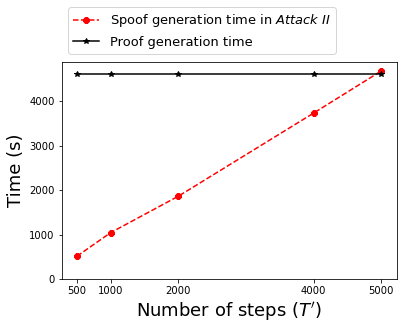}\label{fig:5.5}}
    \subfigure[Spoof size.]{
    \includegraphics[width=0.31\linewidth]{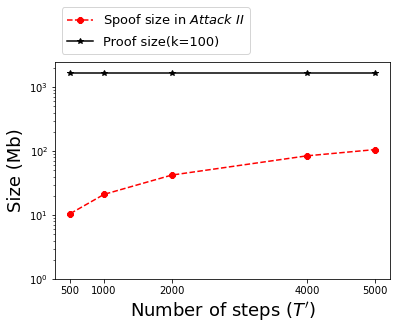}\label{fig:5.6}}        
  \caption{Attack II on non-overlapping CIFAR-10}
    \label{fig:Attack2_non_overlapping_eps}
\end{figure*}

\begin{figure*}
    \centering
    \subfigure[Normalized reproduction error in $l_1$.]{
    \includegraphics[width=0.31\linewidth]{ 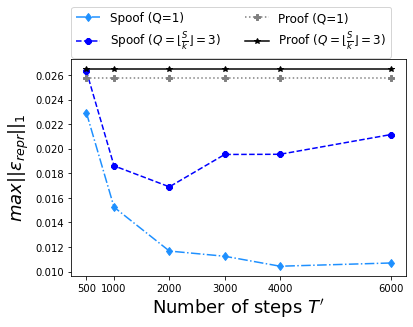}\label{fig:9.1}}
    \subfigure[Normalized reproduction error in $l_2$.]{
    \includegraphics[width=0.31\linewidth]{ 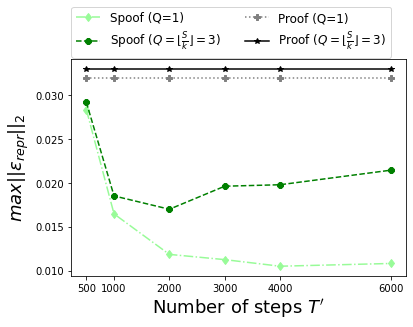}\label{fig:9.2}}
    \subfigure[Normalized reproduction error in $l_{\infty}$.]{
    \includegraphics[width=0.31\linewidth]{ 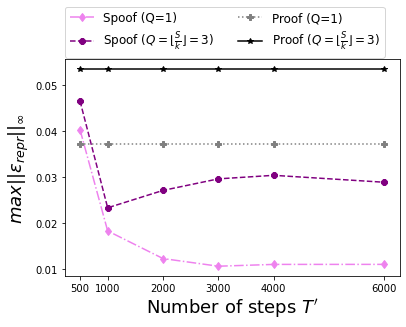}\label{fig:9.3}}
    \subfigure[Normalized reproduction error in $cos$.]{
    \includegraphics[width=0.31\linewidth]{ 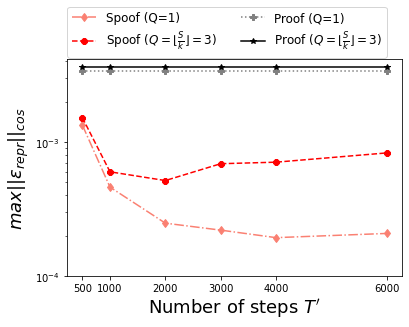}\label{fig:9.4}}
    \subfigure[Spoof generation time.]{
    \includegraphics[width=0.31\linewidth]{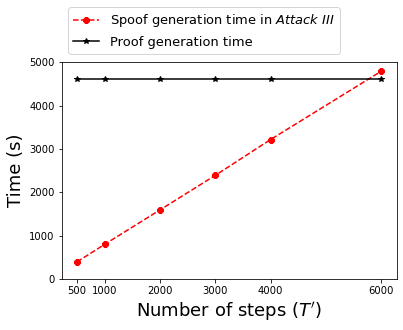}\label{fig:9.5}}
    \subfigure[Spoof Size.]{
    \includegraphics[width=0.31\linewidth]{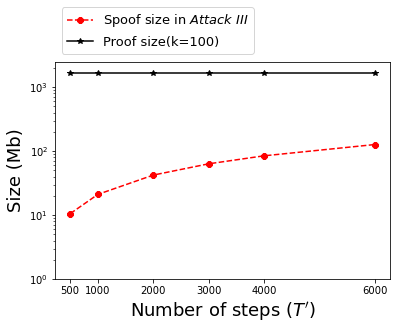}\label{fig:9.6}}
    
    \caption{Attack III on non-overlapping CIFAR-10.}
    \label{fig:Attack3_non_overlapping_eps}
\end{figure*}

\revision{
\Paragraph{Different hardware.}
\textlabel{Next}{e:2:1}, we show the spoof generation time is smaller than the proof generation time even if $\Adv$ uses a less powerful hardware then $\Prov$.
We pick four kinds of hardware and rank them in ascending order according to their computing power: 
{\sf Tesla T4}, {\sf Tesla P40}, {\sf GeForce RTX 2080Ti} and {\sf Tesla V100}. 
We run Attack~II and Attack~III on all these hardware.
In addition, we use the most powerful hardware (i.e., {\sf Tesla V100}) to generate the PoL proof.
\fig~\ref{fig:handwares} (in Appendix) shows that the attacks can succeed even if we use the least powerful hardware to generate spoof and use the most powerful hardware to generate proof.
}

\Paragraph{Summary.}
In summary, both Attack~II and Attack~III can successfully spoof a PoL when $T'$ is within a certain range.
\revision{\textlabel{The results}{r2:2:3} show that $T'$ can be large enough to be in the same level with $T$.
Consequently, $\Verif$ cannot simply detect the attack by setting a lower bound on the number of training steps.
}
\textlabel{ }{r2:2:2}
\deleted{
We further remark that our attacks are slow mostly because they are not well-engineered as normal training, e.g., the optimizers in Tensorflow and Pytorch adopt a bunch of tricks to accelerate training.
Therefore, the number of steps $T'$ could be potentially increased, making our attack more imperceptible.}

\fig~\ref{fig:ad_exp} (in Appendix) shows some randomly picked images before and after running our attacks.
The differences are invisible.


\section{Countermeasures}
\label{sec:discuss}

In this section, we provide two potential countermeasures for our attacks.

\deleted{
Model distance.}\textlabel{ }{r1:4:1}
\deleted{Recall that Attack~II and Attack~III require: $d(W_t, W_{t-k}) \leq \delta$.
However, in our experiments, we found that such distances in real training are usually larger. 
Therefore, $\Verif$ could set a bound for the distance between each $(W_t, W_{t-k})$.
Notice that this countermeasure is invalid for Attack I.}

\deleted{
Integrity of training dataset.
In PoL, it is assumed that $\Adv$ has full access to the training dataset $D$ and can modify it. 
If $\Adv$ has full access to $D$, most probably $D$ is a public dataset. 
Then, $\Verif$ can easily verify the integrity of $D$ so that $\Adv$ cannot modify it.
If $D$ is a private dataset, then $\Adv$ can neither access nor modify it.
That means we could change the threat model of PoL to make the attack more difficult.
However, $\Adv$ could still spoof $W_T$ using a totally different dataset $D'$ and claim that $D'$ is the real training dataset of $W_T$. }


\deleted{
Checkpointing interval and batch size.} \textlabel{ }{r2:3:5}
\deleted{
Recall that Attack~III for CIFAR-100 has to use either a small $k$ or a small batch size due to memory limitation.
Then, $\Verif$ could set bounds for both $k$ and the batch size.
However, this countermeasure is invalid for either Attack~I or Attack~II.
When $\Adv$ has a large GPU memory, it is invalid for Attack~III either.}


\textlabel{ }{r2:2:1}
\deleted{Number of training steps. Recall that Attack~II requires $T'<30$ to succeed and Attack~III requires $T'<300$. 
Then, $\Verif$ could set a bound for the number of training steps. 
However, as we mentioned, our attacks are slow mostly because they are not well-engineered as normal training.
We could potentially accelerate our attacks using the the state-of-the-art optimizers.
Then, bounding the number of of training steps will be no longer useful.
Furthermore, this countermeasure is invalid for Attack~I.}

\Paragraph{Selection of threshold.}
As we observed in our experiments, $\varepsilon_{\mathit{repr}}$ in the early training stage is usually larger than that in the later stage, because the model converges in the later stage. 
On the other hand, $\varepsilon_{\mathit{repr}}$ remains in the same level in our spoof.
Then, it is unreasonable for $\Verif$ to set a single verification threshold for the whole training stage.
A more sophisticated way would be dynamically choosing the verification thresholds according to the stage of model training:  choose larger thresholds for the early stage, and choose smaller thresholds for the later stage.
However, we can also set $d(W_t, W_{t-k})$ closer in the later stage to circumvent this countermeasure.

\revision{
\Paragraph{VC-based PoL.}
\textlabel{Verifiable computation}{r4:3:1} (VC) allows a delegator to outsource the execution of a complex function to some workers, which return the execution result together with a VC-proof;
the delegator can check the correctness of the returned results by verifying the VC-proof, which requires less work than executing the function.
There are many VC schemes such as SNARK~\cite{Pinocchio,libsnark}, STARK~\cite{libstark} etc; most of them require $O(n\log n)$ computational complexity to generate the proof, where $n$ is the number of gates in the function.
We can use VC to build a secure PoL mechanism:
during model training, $\Prov$ generates a VC-proof, proving that the final model $W_T$ was resulted by running the training algorithm on the initial model $W_0$ and the dataset $D$;
when the model ownership is under debate, $\Prov$ can show the VC-proof to $\Verif$.
To spoof, $\Adv$ has to devote $O(n\log n)$ computation to generate a valid VC-proof,  which is almost equal as $\Prov$.
This mechanism is valid, but it will introduce an overwhelming overhead.
}

\section{Related Work}
\label{sec:related}

\Paragraph{Adversarial examples.}
When first discovered in 2013~\cite{intri2013}, adversarial examples are images designed intentionally to cause deep neural networks to make false predictions. 
Such adversarial examples look almost the same as  original images, thus they show vulnerabilities of deep neural networks~\cite{goodfellow2014explaining}. 
Since then, it becomes a popular research topic and has been explored extensively in both attacks~\cite{dong2018boosting, su2019one} and defences~\cite{papernot2016distillation,buckman2018thermometer,Bhagoji2018Enhancing,zheng2016improving,wang2017learning,luo2016foveationbased}. 


Adversarial examples are generated by solving:
$$\mathbf{R}=\mathop{\arg\min}_{\mathbf{R}}L(f_{W}(\mathbf{X}+\mathbf{R}),\mathbf{y}')+\alpha||\mathbf{R}||,$$
where $\mathbf{y}'$ is a label that is different from the real label $\mathbf{y}$ for $\mathbf{X}$.
Then, the noise $\mathbf{R}$ can fool the model to predict a wrong label (by minimizing the loss function) and pose little influence on the original instance $\mathbf{X}$.

Recall that the objective of Attack~II and Attack~III is to minimize the gradients computed by ``adversarial examples''. 
Therefore, it can be formulated as: $$\mathbf{R}=\mathop{\arg\min}_{\mathbf{R}}L(f_{W}(\mathbf{X}+\mathbf{R}),\mathbf{y})+\alpha||\mathbf{R}||.$$
This is identical to the objective of finding an adversarial example.
An adversarial example aims to fool the model whereas our attacks aim to update a model to itself.
Nevertheless, they end up at the same point, which explains the effectiveness of Attack II and Attack III.




\Paragraph{Unadversarial examples.}
\textlabel{Unadversarial examples}{r1:1:1}~\cite{salman2021unadversarial} target the scenarios where a system designer not only trains the model for predictions, but also controls the inputs to be fed into that model.
For example, a drone operator who trains a landing pad detector can also modify the surface of the landing pads.
Unadversarial examples are generated by solving:
$$\mathbf{R}=\mathop{\arg\min}_{\mathbf{R}}L(f_{W}(\mathbf{X}+\mathbf{R}),\mathbf{y}),$$
so that the new input $(\mathbf{X}+\mathbf{R})$ can be recognized better by the model $f_{W}$.

This is similar to the objective function of our attacks besides the regularization term we use to minimize the perturbations. 
The authors in~\cite{salman2021unadversarial}  demonstrate the effectiveness unadversarial examples via plenty of experimental results, which can also explain the success of our attacks.

\Paragraph{Deep Leakage from Gradient.}
In federated learning~\cite{mcmahan17a, bonawitz2019towards, li2020federated, kairouz2019advances}), 
it was widely believed that shared gradients will not leak information about the training data.
However, Zhu et al.~\cite{zhu2020deepleakage} proposed ``Deep Leakage from Gradients'' (DLG), where the training data can be recovered through gradients matching.
Specifically, after receiving gradients from another worker, the adversary feeds a pair of randomly initialized dummy instance $(X, y)$ into the model, and obtains the dummy gradients via back-propagation.
Then, they update the dummy instance with an objective of minimizing the distance between the dummy gradients and the received gradients:
\begin{equation*}
    \begin{aligned}
  \XX, y &= \mathop{\arg\min}_{\XX, y}||\triangledown W'-\triangledown W||^2\\&= \mathop{\arg\min}_{\XX, y}||\frac{\partial L(f_W(\XX), y)}{\partial W}-\triangledown W||^2
    \end{aligned}
\end{equation*}
After a certain number of steps, the dummy instance can be recovered to the original training data. 

Our attacks were largely inspired by DLG: 
an instance can be updated so that its output gradients can match the given gradients.
The difference between DLG and our work is that DLG aims to recover the training data from the gradients, whereas we want to create a perturbation on a real instance to generate specific gradients.

\section{Conclusion}
\label{sec:conc}

In this paper, we show that a recently proposed PoL mechanism is vulnerable to ``adversarial examples''.
Namely, in a similar way as generating adversarial examples, we could generate a PoL spoof  with significantly less cost than generating a proof by the prover.
We validated our attacks by conducting  experiments extensively. 
In future work, we will explore more effective attacks.

\section*{Acknowledgments}

The work was supported in part by 
National Key Research and Development Program of China under Grant 2020AAA0107705,
National Natural Science Foundation of China (Grant No. 62002319, 11771393, U20A20222) and
Zhejiang Key R\&D Plans (Grant No. 2021C01116).

\balance
\bibliographystyle{plain}
\bibliography{references}

\appendix

\section{Appendix}
\clearpage
\begin{figure*}[ht]
    \centering
    \subfigure[Intermediate models accuracy on CIFAR-10]{
    \includegraphics[width=0.3\linewidth]{ 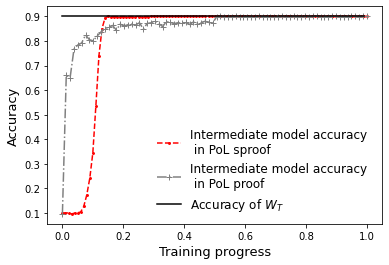}}
    \subfigure[Intermediate models accuracy on CIFAR-100]{
    \includegraphics[width=0.3\linewidth]{ 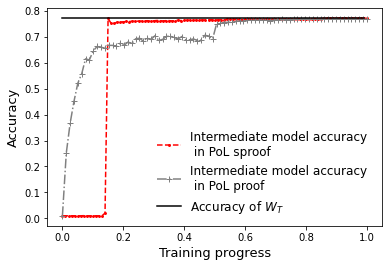}}
    \subfigure[Intermediate models accuracy on ImageNet]{
    \includegraphics[width=0.3\linewidth]{ 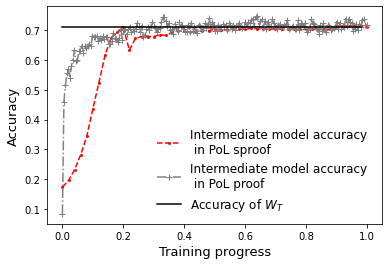}}
    
    \caption{Intermediate model accuracy for 
    Attack~III. 
    \revision{(The $x$-axis presents the progress of training. 
    For example when $x=0.2$, the corresponding $y$ represents the $0.2\cdot T$-th model performance in PoL proof and $0.2\cdot T'$-th model performance in PoL spoof)}}
    \label{fig:intermediate_model_accuracyl}
\end{figure*}

\begin{figure*}
    \centering
   \subfigure[CIFAR-10.]{
    \includegraphics[width=0.32\linewidth]{ 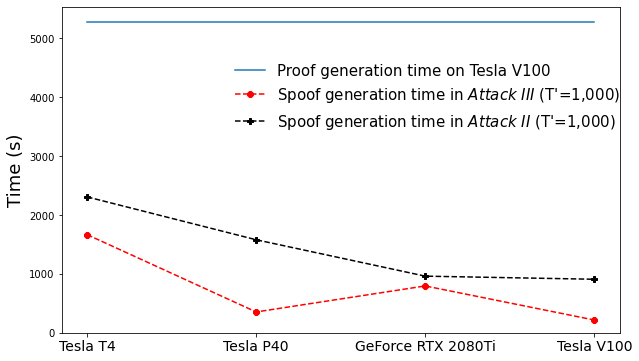}}
    \subfigure[CIFAR-100.]{
    \includegraphics[width=0.32\linewidth]{ 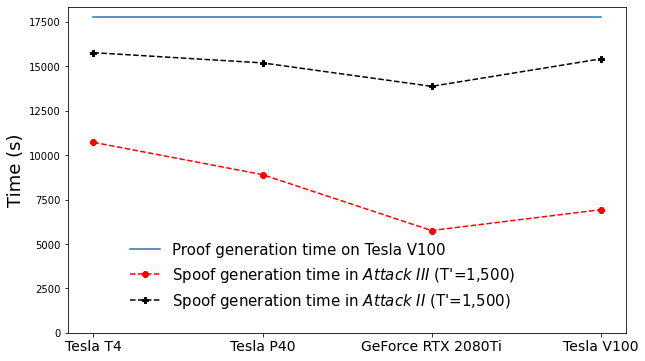}}
    \subfigure[ImageNet]{
    \includegraphics[width=0.32\linewidth]{ 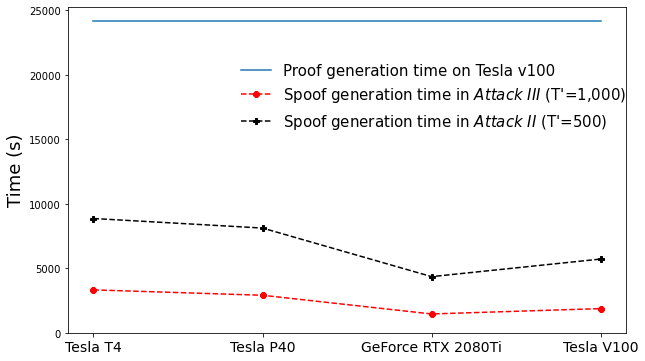}}
    \caption{Spoof generation time with different GPUs. 
    {\sf Tesla T4} is the least powerful one and  {\sf Tesla V100} is the most powerful one. We use  {\sf Tesla V100} to generate the proof.
    }
    \label{fig:handwares}
\end{figure*}

\begin{figure*}
    \centering
    \subfigure[CIFAR-10]{
    \includegraphics[width=0.3\linewidth]{ 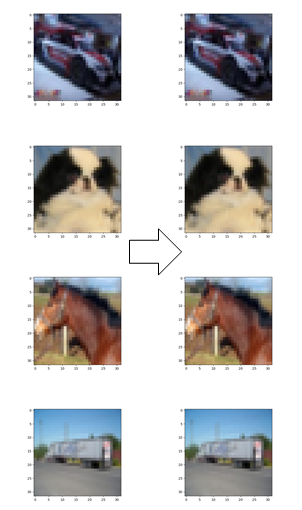}}
    \subfigure[CIFAR-100]{
    \includegraphics[width=0.3\linewidth]{ 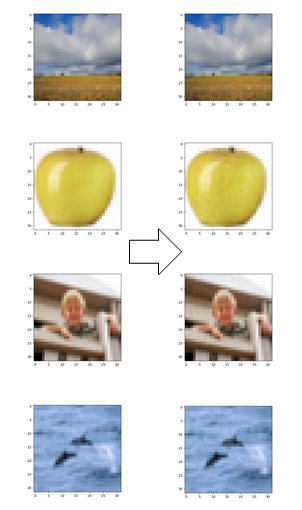}}
    \subfigure[ImageNet]{
    \includegraphics[width=0.3\linewidth]{ 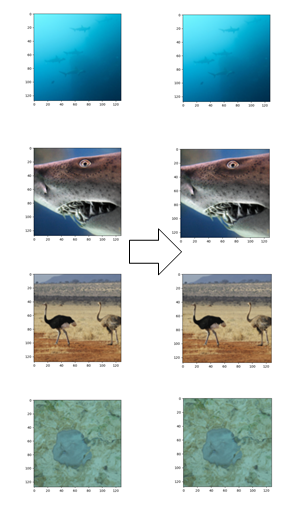}}
    \caption{``Adversarial examples'' generated by Attack~III. The original images are on the left-hand side and the noised images are on the right-hand side.}
    \label{fig:ad_exp}
\end{figure*}

\end{document}